\newcommand{\maintitle}
{Perspective picture from Visual Sphere}
\newcommand{\subtitle}
{a new approach to image rasterization}
\newcommand{\license}
{https://creativecommons.org/licenses/by-nd/3.0/}
\begin{document}


\title{\maintitle: \\ \Large\textit{\subtitle}}
\author{Jakub Maximilian Fober
	\thanks{
		\href{mailto:talk@maxfober.space?subject=About\%20Perspective\%20Paper&cc=jakub.m.fober@pm.me}
		{talk@maxfober.space}
	\newline
		\href{https://maxfober.space}
		{\hspace{1.37em}$^*$https://maxfober.space}
	\newline
		\href{https://orcid.org/0000-0003-0414-4223}
		{\hspace{1.37em}$^*$https://orcid.org/0000-0003-0414-4223}
	}
}
\date{\textit{\today}}

\label{pg:title}
\maketitle

\begin{figure}[h]
	\centerline{
		\includegraphics[width=5.8cm]{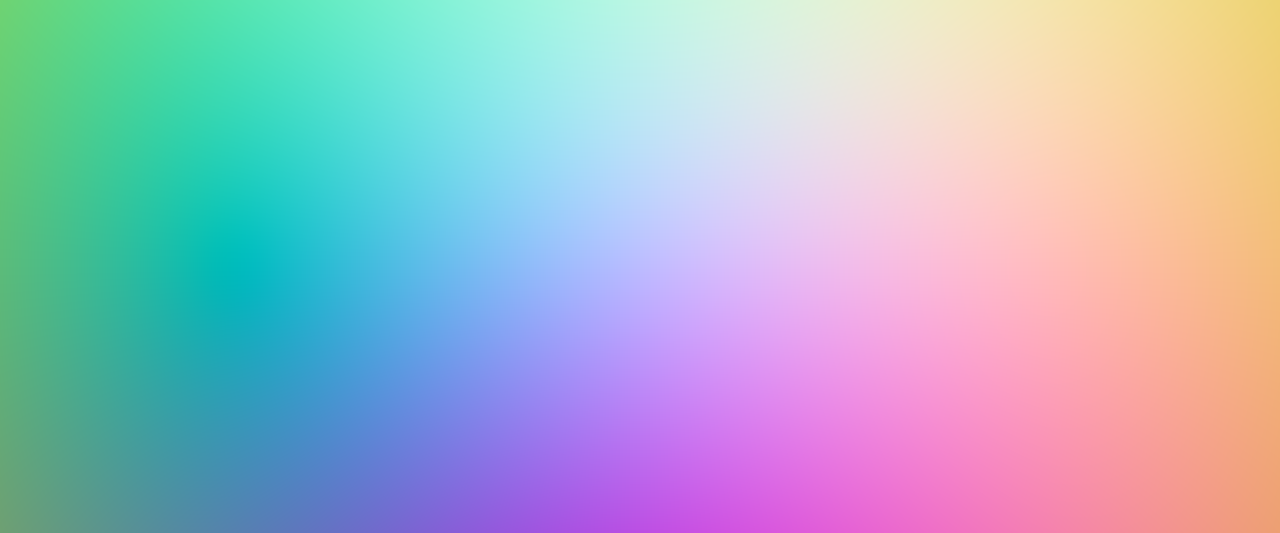}
		\hfill$\rightarrow$\hfill
		\includegraphics[width=5.8cm]{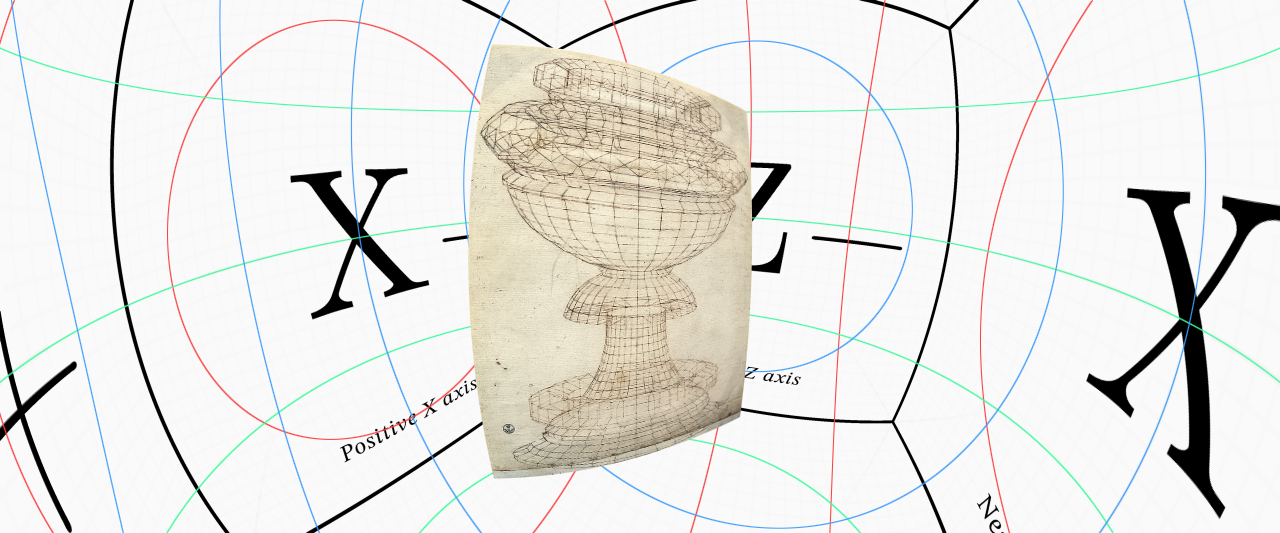}
	}
	\caption[Rasterized polygon example]
		{\small Example of aliasing-free, single-pass rasterization of polygon quad in real-time, from perspective vector map, where $\Omega^d=270\degree$, $k=0.32$, $l=62\%$, $s=86\%$.}
	\label{fig:title page}
\end{figure}

\begin{abstract}
	In this paper alternative method for real-time 3D model rasterization is given. Surfaces are drawn in perspective-map space which acts as a virtual camera lens. It can render single-pass 360\textdegree\ angle of view (AOV) image of unlimited shape, view-directions count and unrestrained projection geometry (e.g. direct lens distortion, projection mapping, curvilinear perspective), natively aliasing-free. In conjunction to perspective vector map, visual-sphere perspective model is proposed. A model capable of combining pictures from sources previously incompatible, like fish-eye camera and wide-angle lens picture. More so, method is proposed for measurement and simulation of a real optical system variable no-parallax point (NPP). This study also explores philosophical and historical aspects of picture perception and presents a guide for perspective design.
\end{abstract}

\vfill
\hyperlink{license}{\includegraphics{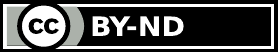}}
\pagebreak


\section*{Introduction} 

There is a great demand for perspective projection model able to produce computer-generated (CG) image up to 360\textdegree\ of view, with lens-distortions, directly from three-dimensional space, to pixel data. Currently there is no practical direct method for rasterization of real-time graphics in curvilinear perspective.\supercite{Muszynski2019Shaders} Every real-time perspective imagery incorporates \emph{Pinhole Camera} model as a base, some with additional layers of distortion on top. Also knowledge about relationship between motion and perspective has not been properly formulated, leaving void in that field of image science.
This paper aims at solving those issues. Study presents new, universal model for perspective projection and rasterization in CG graphics. It also explores history of perspective picture, redefines abstract theorem of image (as recorded in common-knowledge) and establishes rules for picture's perspective design.

This paper begins with a brief introduction to the visual-sphere perspective and jumps right into technical details of rasterization. If one is not familiar with the topic or terms, \emph{Appendix} \vpageref{sec:appendix} gives great introduction and presents some basic knowledge on picture, perspective and perception, along with brief history of the topic.
If one is familiar with this work and looks for some reference, Section \vref{sec:transformations} presents algorithms of various perspective maps, used in visual-sphere rasterization.
Within that, subsection \vref{sub:universal} presents \emph{Universal Perspective} model. It describes variable image geometry in common projections.
Variable (or floating) no-parallax point of real optical systems is described in section \vref{sec:npp}.
Additional figures are presented at the end of the document.
At the very end of this document are code listings in GLSL for most of presented equations.

If one is looking for some artistic view on the topic, I recommend \emph{Conclusion} \vpageref{sec:conclusion} and some content from \emph{Appendix} with Chapter's 1 introduction.

I hope that this work will bring a new breath into image production, open minds and possibilities for creative action and thinking.
\pagebreak


\phantomsection
\pdfbookmark[1]{Table of contents}{sec:contents} 
\tableofcontents


\phantomsection
\pdfbookmark[1]{Figures list}{sec:figures} 
\listoffigures


\phantomsection
\pdfbookmark[1]{Code list}{sec:listings} 
\lstlistoflistings

\pagebreak


\section[on Visual Sphere perspective]{From Visual Pyramid to Visual Sphere}
\label{sec:visual sphere}

\begin{figure}[H]
	\centering\includegraphics{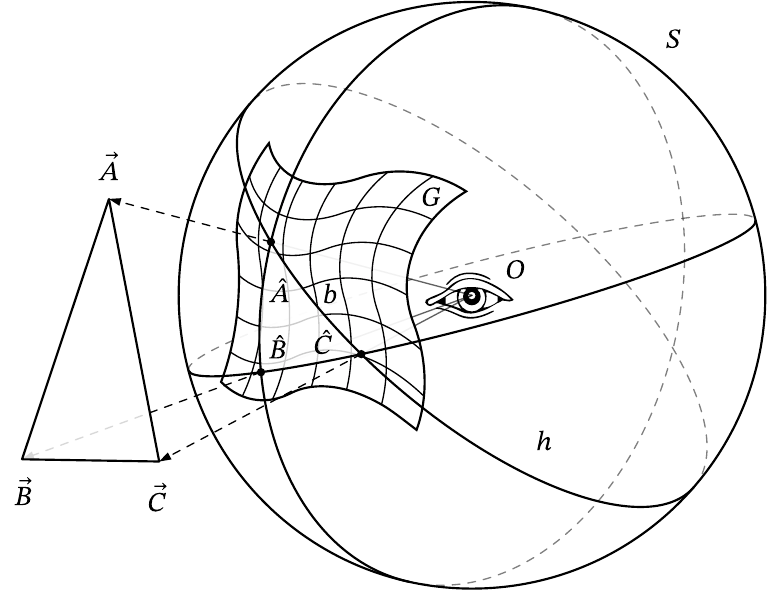}
	\caption[Visual sphere model]
		{Projection of triangle $\overline{ABC}$ onto visual unit sphere $S$, where projection origin $O$ is at the sphere center. Edge of the projected triangle is always produced by an arc of a great circle, here $\overline{AC}\mapsto b\in h$. Grid $G$ represents visual sphere vector map, where each pixel color is a spherical unit vector.}
	\label{fig:Visual-sphere-graph}
\end{figure}

Visual pyramid of \noun{Alberti} theorem\supercite{Alberti1972teatrise} is by definition restricted to acute angles, which is limiting in terms of describable projections. This property makes stitching or layering multiple pictures defined in such space a problematic task.
In the standard perspective model, introduced by \noun{L.B. Alberti}, 3D point position is transformed into 2D plane/screen coordinates. Noticeably in some curvilinear projections, points are stretched into lines (see example \vref{ex:nonlinear mapping}), making single visible point appear in multiple spots on picture plane/screen. This makes linear 3D$\mapsto$2D point position mapping insufficient.

In the proposed visual-sphere model, every visible point in perspective picture's space has its own spherical position coordinate. Thus single point of a visual sphere can occupy multiple places in the picture, which conforms to this principle of curvilinear perspective. Such perspective space (a sphere) allows for stitching, layering and mapping of images in any single-point projection and perspective geometry.

Points in spherical projection model are no longer transformed into screen space. Rather lines are calculated and combined forming a polygon image (see figure \vref{fig:Visual-sphere-graph}). Straight line projected through the center point $O$ will always form an arc of a great circle $h$. Such great circle can be mapped onto visual sphere described by incident vector map (later referred as \emph{perspective map}). This process involves rotating the perspective map vector data. Goal is to align one of the axis of the map $G$ with a great circle forming polygon edge (see figure \ref{fig:Visual-sphere-graph}). One way to rotate axis component is to calculate dot product between the perspective vector map $G$ and a unit vector perpendicular to points forming polygon edge, like $\Vert\vec C\times\vec A\Vert$ and the great circle $h$.

This process is essentially a half-space rasterization technique,\supercite{Pineda1988EdgeFunctions} extended to spherical space.

\subsection[Rasterization with perspective map]{Rasterization of the $\overline{ABC}$ polygon triangle using $\hat G$ vector from perspective map}
\label{sub:rasterization}

Projected polygon geometry is always part of a great circle. The goal of the algorithm is to rasterize polygon shape formed by those spherical lines. Rasterization process involves determining orientation of the great-circle. Then rotating perspective map, so that the vertical axis of the map aligns with a great-circle. Next the step-function is performed on a $\vec G'_i$ component of the rotated vector map $G'$.
The step function algorithm is essential for aliasing-free edge rasterization.
Full $n\text{-sided}$ convex polygon picture is defined by intersection of $n\text{-number}$ of such operations.
\begin{subequations}
\begin{gather}
	\begin{bmatrix}
		\vec G'_1 \\
		\vec G'_2 \\
		\vec G'_3
	\end{bmatrix}
	=
	\begin{bmatrix}
		\begin{aligned}
			\hat G &\cdot \Vert\vec A\times\vec B\Vert \\
			\hat G &\cdot \Vert\vec B\times\vec C\Vert \\
			\hat G &\cdot \Vert\vec C\times\vec A\Vert
		\end{aligned}
	\end{bmatrix} \\
	\overline{ABC}\mapsto G = \min\left\{ \text{step}(\vec G'_1),\; \text{step}(\vec G'_2),\; \text{step}(\vec G'_3) \right\} \qed
\end{gather}
\end{subequations}
Rotation of perspective map vector $\hat G$ is performed with a dot product between each $\hat G$ and rotation-direction vector. Which is a dot product between perspective map $G$ and normalized cross product of two edge points.
Rotation vector doesn't have to be normalized, but that could cause some precision errors, especially when using aliasing-free step function.

\begin{figure}[H]
	\centerline{
		\begin{subfigure}[t]{90pt}
			\includegraphics{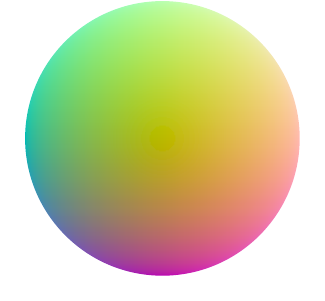}
			\caption{$RGB$ as $xyz$ coordinates}
		\end{subfigure}
		$\leftarrow$
		\begin{subfigure}[t]{90pt}
			\includegraphics{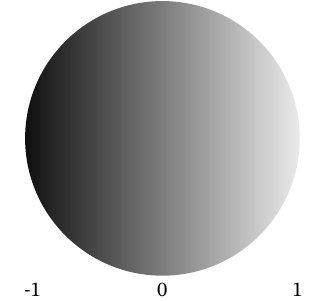}
			\caption{Unit $\text{vector-}x$}
		\end{subfigure}
		\phantom{$\leftarrow$}
		\begin{subfigure}[t]{90pt}
			\includegraphics{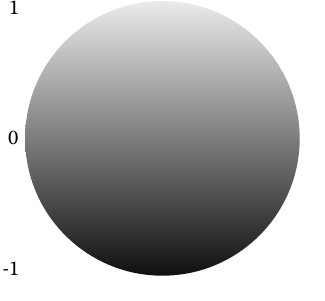}
			\caption{Unit $\text{vector-}y$}
		\end{subfigure}
		\phantom{$\leftarrow$}
		\begin{subfigure}[t]{90pt}
			\includegraphics{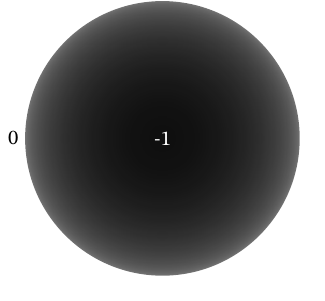}
			\caption{Unit $\text{vector-}z$}
		\end{subfigure}
	}
	\caption[Visual sphere coordinates]
		{Visual sphere in vector coordinates as seen from outside, facing $Z$ direction. Each vector component $x$, $y$ and $z$ can be rotated independently.}
	\label{fig:visual sphere coordinates}
\end{figure}

\noindent Visual sphere perspective model is ultimately a \emph{cosine}-based perspective model and can produce pictures in both \emph{tangent perspective} and \emph{sine perspective}.
As seen in figure \vref{fig:visual sphere coordinates}, each sphere surface point has a value associated to its position. Those three values also represent cosine of an angle between axis vectors $[1\quad0\quad0]$, $[0\quad1\quad0]$, $[0\quad0\quad1]$ and the surface position vector. After rotation of $\hat G$, those three axes are represented by rotation matrix vectors.

\subsubsection{Aliasing-free step function}
\label{sub:step function}

Step function is performed on rotated $\vec G'$ coordinate and determines inside and outside of a polygon for each edge. Aliasing-free result can be achieved by changing width of the step-function slope from binary, to one-pixel wide (see figure \vref{fig:step function}).
Below presented are two algorithms for a step function.
\begin{subequations}
\begin{align}
	\text{pstep}\big(g\big) &= \left\{\frac{g}{\partial\big(g\big)}+\frac{1}{2}\right\}\cap [0,1] \\
	\text{bstep}\big(g\big) &=
		\begin{cases}
			1, & \text{if}\ g>0 \\
			0, & \text{otherwise}
		\end{cases}
\end{align}
\end{subequations}
$\text{pstep}(g)$ function (\emph{p} standing for \emph{pixel}) gives smooth aliasing-free result. The $\text{bstep}(g)$ function is a binary operation and gives aliased output. $\partial(x)$ function is an equivalent to $\textbit{fwidth}(x)$ GLSL function. $\nicefrac{1}{2}$ offset centers anti-aliasing gradient at the polygon edge (as seen in subfigure \ref{fig:step clamp} compared to subfigure \vref{fig:step gradient}).

\paragraph{Global delta $\delta$} can replace denominator $\partial(g)$ in $\text{pstep}()$ function. It gives more natural look and has smaller computational footprint.
Global delta $\delta$ is derived from perspective vector map $\hat G$.
Its value can be preprocessed and saved in a texture.
The difference between preprocessed delta $\delta$ and per-fragment $\partial$ is in polygon-edge slope width.

\begin{subequations}
\begin{align}
	\delta &=
		\frac{1}
		{\max\big\{
			\partial\big(\hat G_x\big),
			\partial\big(\hat G_y\big),
			\partial\big(\hat G_z\big)
		\big\}} \\
	\text{gpstep}\big(g\big) &= \left\{\delta g + \frac{1}{2} \right\} \cap [0,1]
\end{align}
\end{subequations}
The result of $\delta$ calculation is a single-channel texture, which represents reciprocal of the component-wise maximum $\textbit{fwidth}(x)$ of perspective map vector $\hat G$.

\begin{figure}[H]
	\centerline{
		\begin{subfigure}[t]{120pt}
			\includegraphics{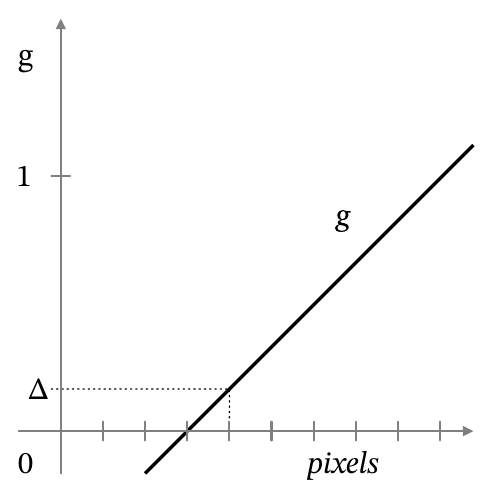}
			\caption{Gradient before the step function.}
			\label{fig:step gradient}
		\end{subfigure}\hspace{2em}
		\begin{subfigure}[t]{120pt}
			\includegraphics{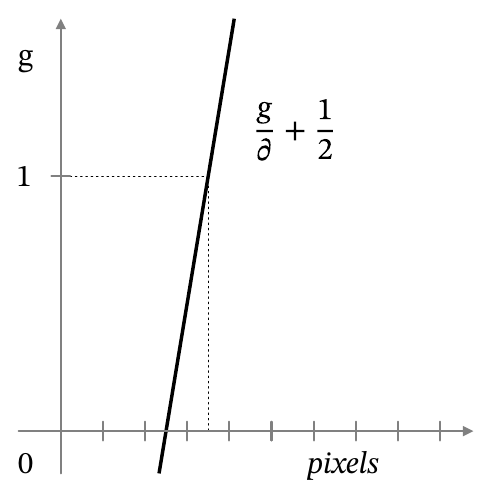}
			\caption{Gradient divided by its pixel $\partial$.}
			\label{fig:step delta}
		\end{subfigure}\hspace{2em}
		\begin{subfigure}[t]{120pt}
			\includegraphics{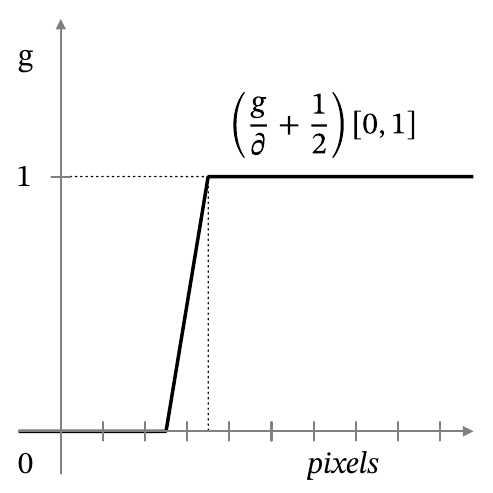}
			\caption{Pixel gradient clamped to $[0,1]$ range.}
			\label{fig:step clamp}
		\end{subfigure}
	}
	\caption[Aliasing-free step function]
		{Aliasing-free step function process, where $g$ is the gradient function. $\partial$ is equivalent of $\text{fwidth}(g)$ GLSL function. Horizontal axis represents pixel position and vertical, the value of $g$.}
	\label{fig:step function}
\end{figure}

\lstinputlisting[float=p,
	label={lst:pixel step},
	caption={
		[Pixel step function]
		Aliasing-free step function in GLSL.
	}
]{./glsl/step_lib.glsl}

\begin{rem}
	When combining polygons into polygon-strip using pixel-step function, it is important to sum each mask, otherwise visible seam may occur.
	See more on combining fragments into buffer in subsection \vref{sub:occlusion}.
\end{rem}

\subsubsection[Miter mask for AA]{Miter mask for pixel-step rasterization}
\label{sub:miter mask}

In special case, when projected polygon's edges meet at very shallow angle, corners will extend beyond polygon outline (due to half-pixel offset in the $\text{pstep}(g)$ function). This visual artifact can be corrected by a miter mask. There are many ways to calculate such mask, one is to define the smallest circle over projected triangle. Following algorithm uses barycentric coordinates to determine the smallest circle's center and size.\supercite{Ericson2007SmallestCircle}

\begin{subequations}
\begin{align}
	&\phantom= \begin{cases}
		\begin{aligned}
			a^2	&= |\hat B-\hat C|^2 = (\hat B-\hat C)\cdot(\hat B-\hat C) \\
			b^2	&= |\hat C-\hat A|^2 = (\hat C-\hat A)\cdot(\hat C-\hat A) \\
			c^2	&= |\hat A-\hat B|^2 = (\hat A-\hat B)\cdot(\hat A-\hat B)
		\end{aligned}
	\end{cases} \\
	\begin{bmatrix}
	 \vec O_s \\
	 \vec O_t \\
	 \vec O_p
	\end{bmatrix} &= \begin{bmatrix}
	 a^2(b^2+c^2-a^2) \\
	 b^2(c^2+a^2-b^2) \\
	 c^2(a^2+b^2-c^2)
	\end{bmatrix} \\
	\vec S	&=
	\begin{cases}
		0.5(\hat B+\hat C),	& \text{if}\quad \vec O_s \leq 0 \\
		0.5(\hat C+\hat A),	& \text{if}\quad \vec O_t \leq 0 \\
		0.5(\hat A+\hat B),	& \text{if}\quad \vec O_p \leq 0 \\
		\frac{\vec O_s\hat A+\vec O_t\hat B+\vec O_p\hat C}{\vec O_s+\vec O_t+\vec O_p},	& \text{otherwise} \\
	\end{cases} \qed
\end{align}
\end{subequations}
Where $\vec O$ is the barycentric coordinate of circumcenter. $a^2,b^2,c^2$ are squared lengths of projected triangle's edges. $\vec S$ is the smallest-circle center vector which length is equal to the cosine of an angle between $\hat S$ and the smallest-circle rim. Polygon triangle is degenerate if $0=\vec O_s+\vec O_t+\vec O_p$, meaning all projected $\hat A,\hat B,\hat C$ points lay in one line. In such case miter mask can be omitted.
\lstinputlisting[float=p,
	label={lst:miter mask},
	caption={
		[Miter mask for aliasing-free rasterization]
		Polygon miter mask function in GLSL, where matrix \textbit{triangle} represents vertices in camera-space.
	}
]{./glsl/miter_mask.glsl}

\noindent Having the smallest circle center vector $\vec S$, rasterization algorithm can be updated as follows:
\begin{subequations}
\begin{gather}
	\begin{bmatrix}
		\vec G'_1 \\
		\vec G'_2 \\
		\vec G'_3 \\
		\vec G'_4
	\end{bmatrix}
	=
	\begin{bmatrix}
		\begin{aligned}
			\hat G &\cdot \Vert\vec A\times\vec B\Vert \\
			\hat G &\cdot \Vert\vec B\times\vec C\Vert \\
			\hat G &\cdot \Vert\vec C\times\vec A\Vert \\
			\hat G &\cdot \hat S - |\vec S|
		\end{aligned}
	\end{bmatrix} \label{eq:rasterization}\\
	\overline{ABC}\mapsto G = \min\left\{\text{pstep}(\vec G'_1),\; \text{pstep}(\vec G'_2),\; \text{pstep}(\vec G'_3),\; \text{pstep}(\vec G'_4) \right\} \qed
\end{gather}
\end{subequations}

\begin{figure}[H]
	\centering\includegraphics{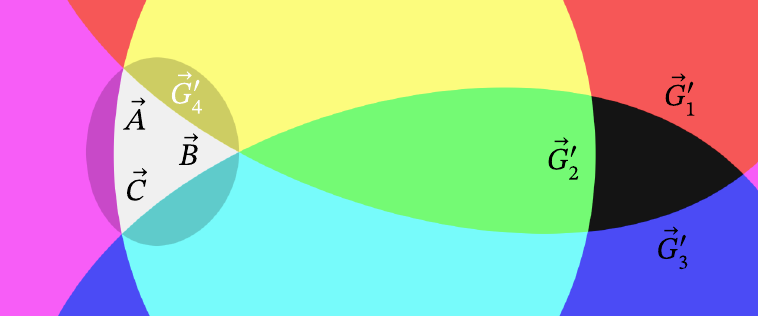}
	\caption[Rasterization masks in RGB]
		{Rasterization masks as RGB values, where perspective map represents $\Omega^d=270\degree$, $k=0.32$, $l=62\%$, $s=86\%$.}
	\label{fig:raster masks}
\end{figure}

\subsubsection[Matrix rasterization]{Rasterization using matrix multiplication}
\label{sub:n-sided polygon}

Since rasterization involves multiple dot products (equation \ref{eq:rasterization}), part of the process can be mitigated to matrix multiplication, where each matrix row represents rotation vector. Such matrix is calculated once per polygon--per frame and is executed per fragment pixel. Full vertex rasterization process is presented on figure \vref{fig:vertex flowchart}.

\begin{subequations}
\begin{gather}
	\begin{bmatrix}
		\vec G'_1 \\
		\vec G'_2 \\
		\vec G'_3 \\
		\vec G'_4
	\end{bmatrix}
	=
	\begin{bmatrix}
		\hat G_x \\
		\hat G_y \\
		\hat G_z
	\end{bmatrix}
	\begin{bmatrix}
		\Vert\vec A\times\vec B\Vert \\
		\Vert\vec B\times\vec C\Vert \\
		\Vert\vec C\times\vec A\Vert \\
		\hat S_x\ \hat S_y\ \hat S_z
	\end{bmatrix}
	-
	\begin{bmatrix}
		0 \\
		0 \\
		0 \\
		|\vec S|
	\end{bmatrix}
\\
	\overline{ABC}\mapsto G = \min\left\{\text{pstep}(\vec G'_1),\; \text{pstep}(\vec G'_2),\; \text{pstep}(\vec G'_3),\; \text{pstep}(\vec G'_4) \right\} \qed
\end{gather}
\end{subequations}
\lstinputlisting[float=p,
	lastline=27,
	label={lst:rasterization},
	caption={
		[Aliasing-free rasterization]
		Aliasing-free rasterization function in GLSL. Function $\textbit{pxstep}()$ is described in listing \vref{lst:pixel step}.
	}
]{./glsl/rasterization.glsl}
\lstinputlisting[float=p,
	firstline=28,
	label={lst:aliased rasterization},
	caption={
		[Aliased (jagged) rasterization]
		Binary (jagged) step rasterization function in GLSL. This version produces aliased result, suitable for low-resolution fragment region evaluation.
	}
]{./glsl/rasterization.glsl}

\noindent This procedure can be expanded to any convex $n\text{-sided}$ planar polygon using procedural equation as follows:

\begin{subequations}
\begin{align}
	\begin{bmatrix}
		\vec G'_1 \\
		\vec G'_2 \\
		\vdots \\
		\vec G'_n \\
	\end{bmatrix}
	&=
	\begin{bmatrix}
		\hat G_x \\
		\hat G_y \\
		\hat G_z
	\end{bmatrix}
	\begin{bmatrix}
		\Vert T_{\vec 1} \times T_{\vec 2} \Vert \\
		\Vert T_{\vec 2} \times T_{\vec 3} \Vert \\
		\vdots \\
		\Vert T_{\vec n} \times T_{\vec{(i}+1)\mod n} \Vert \\
	\end{bmatrix}
	\\
	\text{mask}(\vec G') &= \min\left\{ \text{step}(\vec G'_1),\; \text{step}(\vec G'_2), \hdots, \text{step}(\vec G'_n) \right\} \qed
\end{align}
\end{subequations}
Here $T$ describes $n\text{-sided}$ polygon points $n\times3$ matrix, where each row is a vertex position (counting clock-wise). Each rotation vector is derived from cross product between $i\text{-vertex}$ and $i+1$. For the the last $n\textsuperscript{th}$ vertex, next one is \textnumero 1.

\subsubsection[Interpolation in Fragment Shader]{Fragment data interpolation from barycentric coordinates}
\label{sub:barycentric}

Rendering realistic polygon graphics involves shading and texture mapping. Values of normal, depth and UV coordinates associated to each vertex are interpolated across polygon surface using barycentric coordinate of the fragment point.

\begin{figure}[H]
	\centering\includegraphics{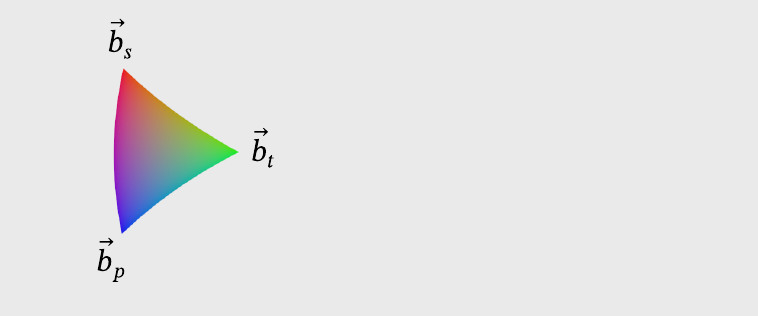}
	\caption[Barycentric coordinates in RGB]
		{Barycentric coordinates as RGB values. Here perspective map represents $\Omega^d=270\degree$, $k=0.32$, $l=62\%$, $s=86\%$.}
	\label{fig:barycentric rgb}
\end{figure}

\begin{equation}
	\vec N = \begin{cases}
		(\vec A-\vec B)\times(\vec C-\vec B) \\
		(\vec B-\vec C)\times(\vec A-\vec C) \\
		(\vec C-\vec A)\times(\vec B-\vec A)
	\end{cases}
\end{equation}
Normal vector $\vec N\in\mathbb{R}^3$ of the triangle plane $\overline{ABC}$ is derived from cross product between two triangle edges. Length of this vector is equal to the area of a parallelogram formed by those two edges, which is equal to double the area of triangle $\overline{ABC}$.

\begin{subequations}
\begin{align}
	r 	&= \frac{\vec A \cdot \vec N}{\hat G \cdot \vec N}
		= \frac{\vec B \cdot \vec N}{\hat G \cdot \vec N}
		= \frac{\vec C \cdot \vec N}{\hat G \cdot \vec N} \\
		&\equiv |\hat G\rightarrow\overline{ABC}|
\end{align}
\end{subequations}
Distance $r$ represents multiplier of the visual sphere vector $\hat G$, to intersection point on the $\overline{ABC}$ triangle plane. Since $\hat G$ is a unit vector, value $r$ can be exported as depth (representing distance, not $z$--position). Here vector $\vec A$, $\vec B$, $\vec C$ in the numerator can be replaced by any point on the triangle plane.

\begin{subequations}
\begin{align}
	\begin{bmatrix}
		\vec b_s \\
		\vec b_t \\
		\vec b_p
	\end{bmatrix}
	&=
	\begin{bmatrix}
		|\big(\vec B-r\hat G\big) \times \big(\vec C-r\hat G\big)|\\
		|\big(\vec C-r\hat G\big) \times \big(\vec A-r\hat G\big)|\\
		|\big(\vec A-r\hat G\big) \times \big(\vec B-r\hat G\big)|
	\end{bmatrix} \div |\vec N|
\end{align}
\end{subequations}
Since $\vec A\cdot\vec B=|\vec A||\vec B|\cos\alpha$ and $\overline{AB}\ \bot\ (\vec A\times\vec B)$ this equation can be rewritten using triple product, as follows.
\begin{subequations}
\begin{align}
	\begin{bmatrix}
		\vec b_s \\
		\vec b_t \\
		\vec b_p
	\end{bmatrix}
	&=
	\begin{bmatrix}
		\big(\vec C-r\hat G\big) \times \big(\vec B-r\hat G\big) \cdot \vec N \\
		\big(\vec A-r\hat G\big) \times \big(\vec C-r\hat G\big) \cdot \vec N \\
		\big(\vec B-r\hat G\big) \times \big(\vec A-r\hat G\big) \cdot \vec N
	\end{bmatrix} \div \big(\vec N \cdot \vec N\big)
\end{align}
\end{subequations}

\lstinputlisting[float=p,
	label={lst:barycentric},
	caption={
		[Barycentric coordinate interpolation vector]
		Barycentric vector function with hard normal and depth function for fragment-data interpolation in GLSL.
	}
]{./glsl/barycentric_interpolation.glsl}
Barycentric vector $\vec b$ is a proportion of surface area. From vector $\vec b$, various vertex properties are interpolated (e.g. depth, normal vector and texture coordinates), given each vertex $A$, $B$ and $C$ has an associated value.

\begin{subequations}
\begin{align}
	f_r &= r \\
	f_{\hat N} &= \Vert \vec b_sA_{\hat N} + \vec b_tB_{\hat N} + \vec b_pC_{\hat N} \Vert \\
	f_{\vec f} &= \vec b_sA_{\vec f} + \vec b_tB_{\vec f} + \vec b_pC_{\vec f}
\end{align}
\end{subequations}
Interpolation of fragment data $f$ is done through a dot product between barycentric coordinate vector $\vec b$ and values associated to each vertex. Here $f_r$ is the depth pass (representing distance, not $z$--position), $f_{\hat N}$ is the interpolated normal vector and $f_{\vec f}$ are the texture coordinates. All interpolations are perspective-correct. Figure \vref{fig:fragment flowchart} shows this process in a step-by-step flowchart.


\subsubsection[Perspective shader pass]{Perspective pixel-shader pass}
\label{sub:perspective shader}

Prior to geometry rasterization, pixel perspective shader pass may be performed. Its output works as a perspective vector map. This pipeline addition enables special effects like dynamic perspective and projection mapping, flat mirror reflection, screen-space refraction (e.g. concave refractive surfaces which expand AOV), etc.
\begin{example}\label{ex:projection mapping}
	Projection mapping with dynamic view-position can be achieved by transforming vector data of world-position-pass texture $S$. Knowing viewer position $\vec O$, offset can be applied to $\vec S$. When normalized, $\hat S'$ produces perspective map vector $\hat G = \Vert \vec S-\vec O \Vert$. Complexity of projection surface and number of views (used projectors) is outside of concern, as view-position transformation is performed on a baked texture.
	Exact formula for projection mapping is available in sub-subsection \vref{sub:various projections}.
\end{example}

\subsection[Wire-frame rasterization]{Wire-frame line segment rasterization}
\label{sub:wireframe}

It is possible to produce screen-relative line drawing based on a perspective map.
Following algorithm will produce wire-frame image of projected $\overline{AB}$ line segment.
\begin{equation}
	\vec G'_z = \hat G\cdot\Vert\vec A\times\vec B\Vert
\end{equation}
Perspective vector map component $\hat G_z$ is rotated by $\overline{ABO}$ tangent vector $\Vert\vec A\times\vec B\Vert$, where $\vec O$ is the observation point.

\begin{subequations}
\begin{align}
	\text{blstep}(\vec G'_z) &=
		\begin{cases}
			1, & \text{if}\quad
					\partial\big(\vec G'_z\big) - |2\vec G'_z| > 0 \\
			0, & \text{otherwise}
		\end{cases}
	\label{eq:blstep}
	\\
	\text{plstep}(\vec G'_z) &= 1-\min\left\{\frac{|\vec G'_z|}{\partial\big(\vec G'_z\big)},\ 1\right\}
	\label{eq:plstep}
\end{align}
\end{subequations}
Above is a line-step function in two variants, binary-aliased (\ref{eq:blstep}) and pixel-smooth (\ref{eq:plstep}). $\partial(x)$ is equivalent to $\textbit{fwidth}(x)$ function of GLSL.

\begin{subequations}
\begin{align}
	\vec L &= \frac{\hat A+\hat B}{2} \\
	h &= \text{lstep}(\vec G'_z) \\
	l &= \min\left\{ \text{step}\big(\hat G\cdot\hat L-|\vec L|\big),\; h \right\} \qed
\end{align}
\end{subequations}
Radial mask combined with great circle $h$ forms $\overline{AB}$ line segment image $l$. \\
$\hat L$ is the line-middle vector, $\text{lstep}(\vec G'_z)$ function rasterizes great-circle $h$. Radial mask is formed by the $\text{step}(x)$ function of dot product between perspective map vector $\hat G$ and line-middle vector $\hat L$, minus $|\vec L|=\cos\nicefrac{\theta}{2}$.
\lstinputlisting[float=p,
	lastline=25,
	label={lst:line segment},
	caption={
		[Aliasing-free wireframe]
		Aliasing-free line segment rasterization function in GLSL. Function $\textbit{pxstep}()$ can be found in listing \vref{lst:pixel step}.
	}
]{./glsl/line.glsl}
\lstinputlisting[float=p,
	firstline=27,
	label={lst:aliased line segment},
	caption={
		[Aliased (jagged) wireframe]
		Binary (jagged) line segment rasterization function in GLSL.
	}
]{./glsl/line.glsl}

\subsection[Simple particle rasterization]{Simple procedural particle rasterization}
\label{sub:particle}

\begin{figure}[H]
	\centering\includegraphics{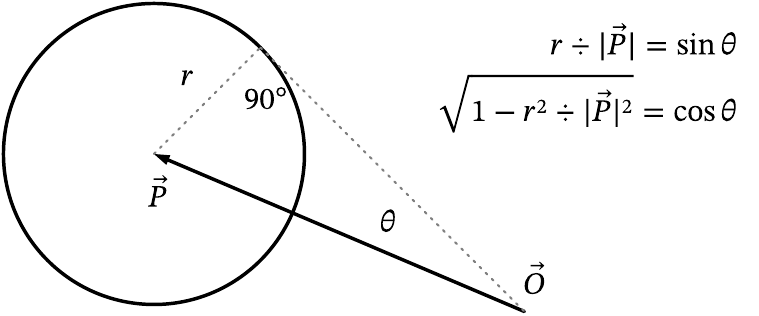}
	\caption[Simple particle model]
		{Simple particle model, where $r$ is the particle radius and $\vec P$ is the particle position from observation point $\vec O$.}
	\label{fig:simple particle}
\end{figure}
\noindent Following algorithm will produce mask image of spherical particle, given position and radius.

\begin{equation}
	\text{particle}(\hat G,\vec P,r)
		=
		\text{step}
		\left(
			\hat G \cdot \hat P - \sqrt{1-r^2 \div (\vec P\cdot\vec P)}
		\right)
\end{equation}
Where $\vec P\in\mathbb{R}^3$ is the particle position. $r$ is particle radius and $\hat G$ is the perspective map vector.
To obtain texture coordinates within the particle, following algorithm can be used.

\begin{subequations}
\begin{align}
	\begin{split}
		\hat X &= \Vert \vec P_z, 0, -\vec P_x \Vert \\
		\hat Y &= \Vert \hat X \times \vec P \Vert \equiv \hat X \times \hat P
	\end{split} \\
	\begin{bmatrix}
		\vec f_s^P \\
		\vec f_t^P
	\end{bmatrix}
	&=
	\begin{bmatrix}
		\hat G_x \\
		\hat G_y \\
		\hat G_z
	\end{bmatrix}
	\begin{bmatrix}
		\hat X_x & \hat X_y & \hat X_z \\
		\hat Y_x & \hat Y_y & \hat Y_z
	\end{bmatrix}
	\frac{|\vec P|}{2r}+\frac{1}{2} \qed
\end{align}
\end{subequations}
Where $\vec f^P$ is the texture coordinate of the particle, $\hat X$ and $\hat Y$ are rotation matrix vectors.
Full particle rasterization process with texture coordinates and round mask can be described by following algorithm.

\begin{subequations}
\begin{align}
	\begin{bmatrix}
		\vec G'_1 \\
		\vec G'_2 \\
		\vec G'_3
	\end{bmatrix}
	&=
	\begin{bmatrix}
		\hat G_x \\
		\hat G_y \\
		\hat G_z
	\end{bmatrix}
	\begin{bmatrix}
		\Vert\begin{bmatrix}
			\vec P_z & 0 & -\vec P_x
		\end{bmatrix}\Vert
		\\
		\Vert\begin{bmatrix}
			\vec P_z & 0 & -\vec P_x
		\end{bmatrix}\Vert
		\times \hat P \\
		\hat P_x \quad \hat P_y \quad \hat P_z \\
	\end{bmatrix} \\
	m^P &=
	\text{step}
	\left(
		\vec G'_3 - \sqrt{1-r^2 \div (\vec P\cdot\vec P)}
	\right) \qed \\
	\begin{bmatrix}
		\vec f_s^P \\
		\vec f_t^P
	\end{bmatrix}
	&=
	\begin{bmatrix}
		\vec G'_1 \\
		\vec G'_2
	\end{bmatrix}
	\frac{|\vec P|}{2r}+\frac{1}{2} \qed
\end{align}
\end{subequations}
Where $\vec f^P\in[0,1]^2$ are the texture coordinates, $m^P\in[0,1]$ is the particle mask, $\vec P$ is the particle position with $r$ as radius. $\hat G$ is the perspective-map vector.
\lstinputlisting[float=p,
	label={lst:particle mask},
	caption={
		[Particle rasterization]
		Particle rasterization function in GLSL. Function $\textbit{pxstep}()$ can be found in listing \vref{lst:pixel step}.
	}
]{./glsl/particle.glsl}

\subsection[Hidden surface problem]{Hidden surface occlusion}
\label{sub:occlusion}

One of the key features of 3D polygon rasterization is hidden surface occlusion (HSO). There are many approaches that solve HSO problem. One of them is depth-pass test, where each depth $r$ of the fragment pixel is tested against already rasterized geometry depth $r^\bot$. This simple technique works well with the binary step function (see sub-subsection \vref{sub:step function}). Depth pass test technique produces aliased result with some pixels being drawn multiple times during a single pass.

To utilize advantage of aliasing-free rasterization in perspective map rasterization, geometry data should be rasterized in a very organized order. Optimal approach is to render scene front-to-back and discard already painted pixels during polygon mask rasterization. Such ordered rasterization can be achieved with binary space partitioning\supercite{Fuchs1980BinarySpacePartitioning,Newell1972hidden_surface} (BSP) and concurrent binary trees\supercite{Dupuy2020ConcurrentBinaryTrees} (CBT). This solution produces aliasing-free result, without depth-buffer check. It paints pixels only once for most of the time.

\begin{equation}
	\begin{cases}
		m^{fc} = \min\{m^f,\ 1-m^b\} \\ 
		m = m^b+m^{fc} 
	\end{cases}
\end{equation}
Above equation combines fragment data with the buffer mask in front-to-back rasterization. Where $m^f$ is the currently processed fragment black-and-white mask. $m^b$ is the buffer mask of already rasterized geometry. Mask $m^{fc}$ is the fragment mask clipped (occluded) by the buffer mask $m^b$. Simple addition combines current fragment mask $m^{fc}$ with the buffer $m^b$.
Same addition can be performed for any fragment data, having the $m^{fc}$ as a mask. For example \emph{UV} coordinates for surface textures (equation \ref{eq:mask uv}), normal pass (equation \ref{eq:mask normal}), depth pass (equation \ref{eq:mask depth}), etc. Initial value of each buffer is equal zero.
\begin{subequations}
\begin{align}
	r &= m^{fc} r^f+r^b \label{eq:mask depth}, & r^b(1) &= 0 \\
	\vec f &= m^{fc}\vec f^f+\vec f^b \label{eq:mask uv}, & \vec f^b(1) &= \begin{bmatrix} 0 & 0 \end{bmatrix}\transpose \\
	\hat N &= \Vert m^{fc}\hat N^f+\hat N^b\Vert \label{eq:mask normal}, & \vec N^b(1) &= \begin{bmatrix} 0 & 0 & 0 \end{bmatrix}\transpose
\end{align}
\end{subequations}
Where $r$ is the depth pass, $\vec f\in[0,1]^2$ represents texture coordinates and $\hat N\in[0,1]^3$ is the normal vector. Current fragment mask (occlusion clipped) is denoted by $m^{fc}$. \textquote{$f$} in a superscript indicates current fragment data and \textquote{$b$} in superscript--buffer data of already rasterized geometry. Initial values of the buffer are denoted by \enquote{$(1)$} in suffix.

\lstinputlisting[float=p,
	label={lst:fragment occlusion},
	caption={
		[Fragment data occlusion]
		Fragment data occlusion and buffer merging, for front-to-back rasterization, in GLSL.
	}
]{./glsl/fragment_masking.glsl}

\section[Perspective picture transformations]{Perspective transformations of 2D/3D data}
\label{sec:transformations}

In this section presented algorithms produce perspective picture from 3D and 2D data.
Most of them describe 2D$\rightarrow$3D transformation, which outputs perspective vector map from texture coordinates, suitable for rasterization.

\begin{rem}
	For a proper transformation, 2D coordinates must be normalized for a given AOV type (e.g. vertical, diagonal or horizontal).
	\begin{example*}
		For a pixel $i$ in picture of aspect-ratio 16:9 and horizontal-AOV, coordinates $(i_x,i_y)$ must be centered and horizontally normalized, so that $i_x\in[-1,1]$ and $i_y\in\left[-\frac{9}{16},\frac{9}{16}\right]$.

		\begin{equation}
			\begin{bmatrix}
				\vec f_x \\
				\vec f_y
			\end{bmatrix}
			=
			\left(
				2\begin{bmatrix}
					\vec f_s \\
					\vec f_t
				\end{bmatrix}-1
			\right)
			\begin{cases}
				\begin{bmatrix}1 & \frac{1}{a}\end{bmatrix}\transpose
					,& \text{if $\Omega$ horizontal}
					\smallskip\\
				\begin{bmatrix}a & 1\end{bmatrix}\transpose
					,& \text{if $\Omega$ vertical}
					\smallskip\\
				\begin{bmatrix}a & 1\end{bmatrix}\transpose \div \sqrt{1+a^2}
					,& \text{if $\Omega$ diagonal}
					\smallskip\\
				\begin{bmatrix}a & 1\end{bmatrix}\transpose \div \frac{4}{3}
					,& \text{if $\Omega$ horizontal 4$\times$3}
			\end{cases}
		\end{equation}
		Above are view coordinates $x,y\in[-1,1]$ from texture coordinates $s,t\in[0,1]$, where
		$a$ is the picture aspect ratio and $\Omega$ is the AOV.

		\begin{equation}
			\begin{bmatrix}
				\vec f_s \\
				\vec f_t
			\end{bmatrix}
			=
			\frac{1}{2}+\frac{1}{2}
			\begin{bmatrix}
				\vec f_x \\
				\vec f_y
			\end{bmatrix}
			\begin{cases}
				\begin{bmatrix}1 & a\end{bmatrix}\transpose
					,& \text{if $\Omega$ horizontal}
					\smallskip\\
				\begin{bmatrix}\frac{1}{a} & 1\end{bmatrix}\transpose
					,& \text{if $\Omega$ vertical}
					\smallskip\\
				\begin{bmatrix}\frac{1}{a} & 1\end{bmatrix}\transpose \sqrt{1+a^2}
					,& \text{if $\Omega$ diagonal}
					\smallskip\\
				\begin{bmatrix}\frac{1}{a} & 1\end{bmatrix}\transpose\frac{4}{3}
					,& \text{if $\Omega$ horizontal 4$\times$3}
			\end{cases}
		\end{equation}
		Here are texture coordinates $s,t\in[0,1]$ from view coordinates $x,y\in[-1,1]$, where
		$a$ is the picture aspect ratio and $\Omega$ is the AOV.
	\end{example*}
\end{rem}

\lstinputlisting[float=p,
	label={lst:coordinates},
	caption={
		[Screen coordinates transformation]
		Texture to screen coordinates transformation with aspect ratio and FOV conversion, in GLSL.
	}
]{./glsl/screen_coordinates.glsl}

\subsection[Universal perspective]{Universal perspective model}
\label{sub:universal}

Universal perspective model allows for a smooth adjustment of image geometry in accordance to the visible content.
Presented two transforms produce \emph{perspective picture} (see definition \vpageref{def:azimuthal perspective}).
First 2D$\rightarrow$3D transformation produces perspective vector in various common projections, which is suitable for generating universal-perspective maps for rasterization.
Second 3D$\rightarrow$2D transformation produces picture coordinates of various projections.
Combination of both can map between different projections.
\begin{rem}
	Note that 3D$\rightarrow$2D transformation may be non-linear, meaning that a 3D vector can be represented by multiple 2D coordinates in very special cases.
	\begin{example}\label{ex:nonlinear mapping}
		In \emph{equidistant} projection at $\Omega=2\pi$ (whole sphere) visible point opposite to the view direction is represented by a ring at the picture's boundary.
	\end{example}
\end{rem}

\subsubsection[Parameters and limits]{Universal perspective parameters and limits}
\label{sub:perspective parameters}

There is total of four parameters defining perspective picture geometry in the universal perspective model.
\begin{description}
	\item [angle $\Omega$]
		defines angle of view (AKA FOV), which varies between $\rightarrow\pi$ and $2\pi$.
	\item [scalar $k$]
		defines perspective type as a value in range $[-1,1]$, which interpolates between various azimuthal projections:
		\begin{center}
			\begin{tabular}{c@{\quad$k=$ }d}
				\emph{Gnomonic (rectilinear)} & 1 \\
				\emph{Stereographic} & 0.5 \\
				\emph{Equidistant} & 0 \\
				\emph{Equisolid} & -0.5 \\
				\emph{Orthographic} & -1 \\
			\end{tabular}
		\end{center}
	\item [scalar $l$]
		in range $[0,1]$ defines cylindricity of the projection, where $l=0$ represents cylindrical projection and $l=1$ --- spherical projection.
	\item [scalar $s$]
		in range $[\nicefrac{4}{5},1]$ defines vertical anamorphic correction of a non-spherical projection, meaning that it is only active when $l<1$.
\end{description}
Base projection type is adjusted by the $k$ component. It manipulates image perception. Cylindrical projection, is adjusted by the $l$ component. Low $l$ values should represent view at level (see subfigure \vref{fig:Pitch-yaw-motion}). For roll motion, recommended value for $l$ is $100\%$ (see subfigure \ref{fig:Roll-motion}). Anamorphic correction of non-spherical image, driven by the $s$ component, depends on subject in view. Purpose of the $s$ scalar is to adjust proportions of the figure-in-focus.

\begin{equation}
	\begin{cases}
		\begin{aligned}
			\Omega_{\max}
			&=
			\frac
			{1}
			{\max\big\{\nicefrac{1}{2}, |k|\big\}} \cdot
			\begin{cases}
				\rightarrow\pi 
				,& \text{if } k>0 \\
				\pi,& \text{otherwise}
			\end{cases} \\
			\Omega &\in \big(0, \Omega_{\max}\big] \\
			k &\in [-1,1] \\
			l &\in [0,1] \\
			s &\in [\nicefrac{4}{5},1]
		\end{aligned}
	\end{cases}
\end{equation}
Limits for perspective parameters in universal perspective model. Value of $\Omega_{\max}$ depends on the projection type (represented by scalar $k$) and varies between angle approaching 180\degree\ and 360\degree\ angle.

\lstinputlisting[float=p,
	label={lst:limits},
	caption={
		[Perspective parameters limiting]
		Function for range limiting of universal perspective parameters ($k$, $l$, $s$ and $\Omega$) in GLSL.
	}
]{./glsl/limits.glsl}

\subsubsection[2D--3D transformation]{Transformation of 2D$\rightarrow$3D coordinates}
\label{sub:2D-3D transform}

This transformation produces visual sphere vector map from texture coordinates, that can be later used as an input for perspective map rasterizer.

\begin{subequations}
\begin{align}
	R &= \left\vert
		\begin{bmatrix}
			\vec f_x \smallskip\\
			\vec f_y \cdot \sqrt l
		\end{bmatrix}
	\right\vert
\\
	&\equiv \sqrt{\vec f_x^2+l\vec f_y^2}
\\
	\theta &=
	\begin{cases}
		\arctan{ \big( \tan{ \left( k\frac{\Omega}{2} \right) }R \big) } \div k, & \text{if } k>0 \\
		\frac{\Omega}{2} R, & \text{if } k=0 \\
		\arcsin{ \big( \sin{ \left( k\frac{\Omega}{2} \right) }R \big) } \div k, & \text{if } k<0
	\end{cases}
\\
	\begin{bmatrix}
		\hat v_x \\
		\hat v_y \\
		\hat v_z
	\end{bmatrix}
	&=
	\left\Vert
		\begin{bmatrix}
			\vec f_x \\
			\vec f_y \\
			1
		\end{bmatrix}
		\begin{bmatrix}
			\nicefrac{\sin(\theta)}{R} \\
			\nicefrac{\sin(\theta)}{R} \\
			\cos\theta
		\end{bmatrix}
		\div
		\begin{bmatrix}
			1 \\
			l(1-s)+s \\
			1
		\end{bmatrix}
	\right\Vert \qed
\end{align}
\end{subequations}
Picture coordinates are denoted by vector $\vec f\in[0,1]^2$, $a$ is picture aspect ratio. Transformed 3D coordinates are represented by normalized vector $\hat v\in[-1,1]^3$.
Scalar $k$ represents various projection types.
Scalar $l\in[0,1]$ is the spherical projection factor, with $l=0$ representing cylindrical projection and $l=1$, a spherical projection. Scalar $s\in[\nicefrac{4}{5},1]$ describes anamorphic correction of non-spherical image. For $s=1$ or $l=1$ there is no anamorphic correction.
\lstinputlisting[float=p,
	label={lst:universal 2D-3D},
	caption={
		[Universal perspective map]
		Visual sphere vector $\hat v\in{[-1,1]^3}$ function from texture coordinates $\vec f\in[0,1]^2$ in GLSL, for universal-perspective system.
	}
]{./glsl/universal_persp2d-3d.glsl}

\subsubsection[3D--2D transformation]{Transformation of 3D$\rightarrow$2D coordinates}
\label{sub:3D-2D transform}

This transformation is mainly used in a pixel shader, where basic rectilinear projection can be mapped to spherical one. It is not suitable for mapping 3D points onto 2D picture plane as in some specific cases single 3D point can map to multiple 2D positions (see example \vref{ex:nonlinear mapping}).

\begin{subequations}
\begin{align}
	\hat v &= \begin{bmatrix} \hat v_x \\ \hat v_y \\ \hat v_z \end{bmatrix}
\\
	\theta &= \arccos{ \left( \hat v_z \div \sqrt{ \hat v_x^2+l\hat v_y^2+\hat v_z^2 } \right) }
\\
	R &=
	\begin{cases}
		\tan{ \left( k\theta\right ) } \div \tan{ \left( k\frac{\Omega}{2} \right) }, & \text{if } k>0 \\
		\theta\div\frac{\Omega}{2}, & \text{if } k=0 \\
		\sin{ \left( k\theta\right ) } \div \sin{ \left( k\frac{\Omega}{2} \right) }, & \text{if } k<0
	\end{cases}
\\
	\begin{bmatrix}
		\vec f_x \\
		\vec f_y
	\end{bmatrix}
	&=
	\begin{bmatrix}
		\hat v_x \\
		\hat v_y
	\end{bmatrix}
	\frac{R}{ \sqrt{ \hat v_x^2 +l \hat v_y^2 } }
	\begin{bmatrix}
		1 \\
		l\left(1-s\right)+s
	\end{bmatrix} \qed
\end{align}
\end{subequations}
3D coordinates are represented by a normalized vector $\hat v\in[-1,1]^3$, where view origin is at position $[0\quad0\quad0]$. Transformed picture coordinates are represented by vector $\vec f\in\real^2$, where image center is at position $[0\quad0]$. Angle $\theta$ is between vector $\hat v$ and the $Z$ axis. $R$ is the normalized distance between projected vector $\vec f$ and the image center, where $\vec f\mapsfrom\hat v$. Angle $\Omega$ is equal to AOV, where $\Omega_{\max}\in[\rightarrow\pi,2\pi]$.
Scalar $k$ represents various projection types (see sub-subsection \vref{sub:perspective parameters}).
Scalar $l\in[0,1]$ is the spherical projection factor, where $l=0$ represents cylindrical projection and $l=1$ represents spherical projection. Scalar $s\in[\nicefrac{4}{5},1]$ describes anamorphic correction of non-spherical image. For $s=1$ or $l=1$ there is no anamorphic correction.
This transformation is a reverse of the universal 2D$\rightarrow$3D transform \vpageref{sub:2D-3D transform}.

\subsubsection[2D--2D transformation]{Transformation of 2D$\rightarrow$2D coordinates}
\label{sub:2D-2D transform}

Combination of 3D and 2D transformation can be used to map between two different projections, for example $\text{\emph{stereographic}}\leftrightarrow\text{\emph{equidistant}}$, using two separate perspective component sets, $\{\Omega_i,k_i,l_i,s_i\}$ and $\{\Omega_o,k_o,l_o,s_o\}$ for input and output picture.

\begin{subequations}
\begin{align}
	R_i &= \left\vert
		\begin{bmatrix}
			\vec f_x \smallskip\\
			\vec f_y \cdot \sqrt{l_i}
		\end{bmatrix}
	\right\vert
\\
	&\equiv \sqrt{\vec f_x^2+l_i\vec f_y^2}
\\
	\theta &=
	\begin{cases}
		\arctan{ \big( \tan{ \left( k_i\frac{\Omega_i}{2} \right) }R_i \big) } \div k_i, & \text{if } k_i>0 \\
		\frac{\Omega_i}{2} R_i, & \text{if } k_i=0 \\
		\arcsin{ \big( \sin{ \left( k_i\frac{\Omega_i}{2} \right) }R_i \big) } \div k_i, & \text{if } k_i<0
	\end{cases}
\\
	\begin{bmatrix}
		\hat v_x \\
		\hat v_y \\
		\hat v_z
	\end{bmatrix}
	&=
	\left\Vert
		\begin{bmatrix}
			\vec f_x \\
			\vec f_y \\
			1
		\end{bmatrix}
		\begin{bmatrix}
			\nicefrac{\sin(\theta)}{R_i} \\
			\nicefrac{\sin(\theta)}{R_i} \\
			\cos\theta
		\end{bmatrix}
		\div
		\begin{bmatrix}
			1 \\
			l_i(1-s_i)+s_i \\
			1
		\end{bmatrix}
	\right\Vert
\\
	R_o &=
	\begin{cases}
		\tan{ \left( k_o\theta\right ) } \div \tan{ \left( k_o\frac{\Omega_o}{2} \right) }, & \text{if } k_o>0 \\
		\theta\div\frac{\Omega_o}{2}, & \text{if } k_o=0 \\
		\sin{ \left( k_o\theta\right ) } \div \sin{ \left( k_o\frac{\Omega_o}{2} \right) }, & \text{if } k_o<0
	\end{cases}
\\
	\begin{bmatrix}
		\vec f'_x \\
		\vec f'_y
	\end{bmatrix}
	&=
	\begin{bmatrix}
		\hat v_x \\
		\hat v_y
	\end{bmatrix}
	\frac{R_o}{ \sqrt{ \hat v_x^2 +l_o \hat v_y^2 } }
	\begin{bmatrix}
		1 \\
		l_o\left(1-s_o\right)+s_o
	\end{bmatrix} \qed
\end{align}
\end{subequations}
Input picture coordinates are represented by $\vec f\in\real^2$, while output picture coordinates are denoted as $\vec f'\in\real^2$.

\subsection[Various projections]{Perspective map algorithms for various projections}
\label{sub:various projections}

Algorithms presented below produce visual sphere vector map in various projections, which can be later used as an input for perspective map rasterizer. Result vector $\hat v\in[-1,1]^3$ can be mapped to picture color range $[0,1]^3$ by simple transformation:
\begin{equation}
	\vec G'=\frac{\hat v+1}{2}
\end{equation}

\subsubsection[Rectilinear perspective]{Rectilinear perspective map}
\label{sub:linear perspective}

\begin{equation}
	\begin{bmatrix}
		\hat v_x \\
		\hat v_y \\
		\hat v_z
	\end{bmatrix}
	=
	\begin{cases}
		\left\Vert
		\begin{matrix}
			2\vec f_s-1 \\
			(2\vec f_t-1)\div a \\
			\cot\frac{\Omega^h}{2}
		\end{matrix}
		\right\Vert, & \text{if $\Omega$ horizontal}
	\smallskip \\
		\left\Vert
		\begin{matrix}
			a(2\vec f_s-1)\div\sqrt{a^2+1} \\
			(2\vec f_t-1)\div\sqrt{a^2+1} \\
			\cot\frac{\Omega^d}{2}
		\end{matrix}
		\right\Vert, & \text{if $\Omega$ diagonal}
	\smallskip \\
		\left\Vert
		\begin{matrix}
			a(2\vec f_s-1) \\
			2\vec f_t-1 \\
			\cot\frac{\Omega^v}{2}
		\end{matrix}
		\right\Vert, & \text{if $\Omega$ vertical}
	\end{cases}
\end{equation}
Linear perspective map formula, where $\hat v$ is the visual sphere vector, $\vec f$ represents screen coordinates. $a$ is the screen aspect ratio and $\Omega$ the AOV.
\lstinputlisting[float=p,
	label={lst:rectilinear},
	caption={
		[Rectilinear map]
		Rectilinear perspective visual-sphere vector $\hat v\in{[-1,1]^3}$ from texture coordinates $\vec f\in[0,1]^2$ in GLSL.
	}
]{./glsl/rectilinear.glsl}

\subsubsection[Curved panorama]{Curved panorama map}
\label{sub:curved panorama}

\begin{figure}[H]
	\centering\includegraphics{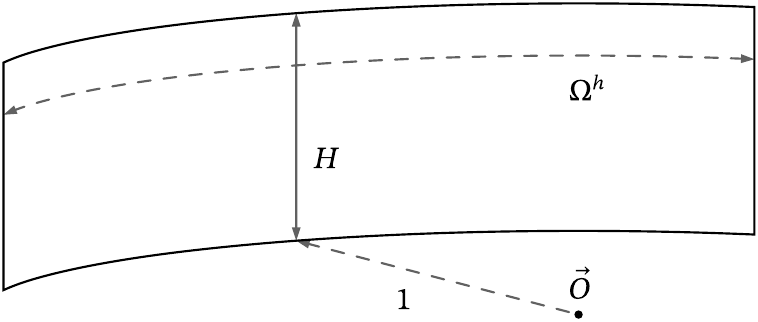}
	\caption[Curved panorama]
		{Curved panorama model, where aspect ratio $a=\frac{\Omega^h}{h}$, radius is equal to 1 and $\vec O$ denotes arc's origin.}
	\label{fig:curved panorama}
\end{figure}

\begin{subequations}
\begin{align}
	a&=\frac{\Omega^h}{H} \qed \\
	\begin{bmatrix}
		\vec f'_x \\
		\vec f'_y
	\end{bmatrix}
	&=
	\begin{bmatrix}
		\Omega^h \\
		H
	\end{bmatrix}
	\left(
	\begin{bmatrix}
		\vec f_s \\
		\vec f_t
	\end{bmatrix} -\frac{1}{2}
	\right) \\
	\begin{bmatrix}
		\hat v_x \\
		\hat v_y \\
		\hat v_z
	\end{bmatrix}
	&=
	\left\Vert\begin{matrix}
		\sin\vec f'_x \\
		\vec f'_y \\
		\cos\vec f'_x
	\end{matrix}\right\Vert \qed
\end{align}
\end{subequations}
Curved panorama perspective-map formula, where $a$ is the panorama aspect ratio, $H$ in the display height-to-radius proportion with $\Omega^h$ as the horizontal AOV. Vector $\vec f$ represents screen coordinates and $\hat v$ is the visual sphere vector.
\lstinputlisting[float=p,
	label={lst:panorama},
	caption={
		[Curved panorama map]
		Curved panorama perspective visual-sphere vector $\hat v\in{[-1,1]^3}$ from texture coordinates $\vec f\in[0,1]^2$ in GLSL.
	}
]{./glsl/panorama.glsl}

\subsubsection[Full dome]{Full dome map}
\label{sub:full dome}

\begin{subequations}
\begin{align}
	\begin{bmatrix}
		\vec f'_x \\
		\vec f'_y
	\end{bmatrix}
	&=
	\begin{bmatrix}
		2\vec f_s-1 \\
		1-2\vec f_t
	\end{bmatrix} \\
	\theta &= |\vec f'|\left(\Omega+\frac{\pi}{2}\right) \\
	\begin{bmatrix}
		\hat v_x \\
		\hat v_y \\
		\hat v_z
	\end{bmatrix}
	&=
	\left\Vert\begin{matrix}
		\vec f'_x\sin\theta \div |\vec f'| \\
		\cos\theta \\
		\vec f'_y\sin\theta \div |\vec f'| +o
	\end{matrix}\right\Vert
	\begin{bmatrix}
		1	&0				&0 \\
		0	&\cos\varphi	&-\sin\varphi \\
		0	&\sin\varphi	&\cos\varphi
	\end{bmatrix}  \qed \\
	m &= \left\{\frac{1-|\vec f'|}{\partial(1-|\vec f'|)}\right\}\cap[0,1] \qed
\end{align}
\end{subequations}
Full dome perspective-map formula, where $\vec f$ is the screen coordinates vector. $\Omega$ is the compression angle, $\varphi$ is the tilt angle. $\hat v$ represents visual sphere vector. $o$ is the view position offset (in radius) and $m$ is the radial mask with $\partial(x)$ being the equivalent of $\textbit{fwidth}(x)$ function.
\lstinputlisting[float=p,
	label={lst:dome},
	caption={
		[Full dome map]
		Full dome visual-sphere vector $\hat v\in{[-1,1]^3}$ from texture coordinates $\vec f\in[0,1]^2$ in GLSL.
	}
]{./glsl/dome.glsl}

\subsubsection[Equirectangular projection]{Equirectangular projection map}
\label{sub:equirectangular}

\begin{subequations}
\begin{align}
	\begin{bmatrix}
		\vec f'_x \\
		\vec f'_y
	\end{bmatrix}
	&=
	\pi
	\begin{bmatrix}
		2\vec f_s-1 \\
		\vec f_t
	\end{bmatrix} \\
	\begin{bmatrix}
		\hat v_x \\
		\hat v_y \\
		\hat v_z
	\end{bmatrix}
	&=
	\begin{bmatrix}
		\sin\vec f'_x \\
		-1 \\
		\cos\vec f'_x
	\end{bmatrix}
	\begin{bmatrix}
		\sin\vec f'_y \\
		\cos\vec f'_y \\
		\sin\vec f'_y
	\end{bmatrix} \qed
\end{align}
\end{subequations}
Equirectangular projection perspective-map formula, where $\vec f$ represents screen coordinates and $\hat v$ is the visual sphere vector.
\lstinputlisting[float=p,
	label={lst:equirectangular},
	caption={
		[Equirectangular map]
		Equirectangular projection visual-sphere vector $\hat v\in{[-1,1]^3}$ from texture coordinates $\vec f\in[0,1]^2$ in GLSL.
	}
]{./glsl/equirectangular.glsl}

\subsubsection[Mirror dome projection]{Mirror dome projection map}
\label{sub:mirror dome}

\begin{figure}[H]
	\centering\includegraphics{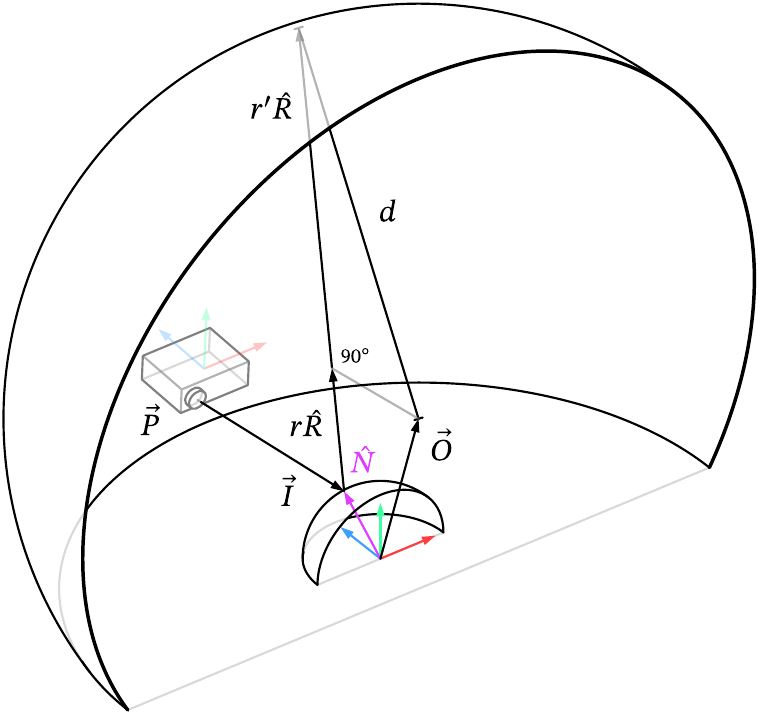}
	$\acwunderarcarrow$
	\includegraphics{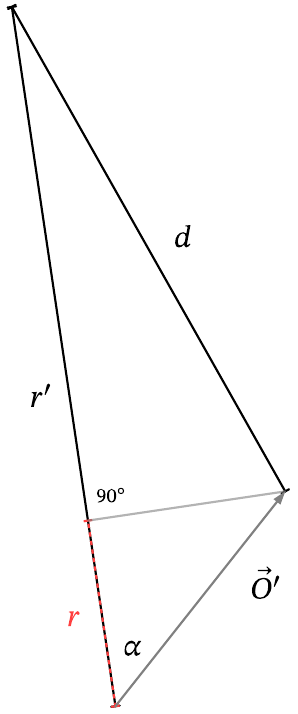}
	\caption[Mirror dome projection]
		{Mirror dome projection model, with mirror center as world origin.\supercite{Burke2008iDome}}
	\label{fig:mirror dome}
\end{figure}
\begin{subequations}
\begin{align}
	\vec I &= \hat N-\vec P \\
	\hat R &= \Vert \vec I-2(\vec I\cdot\hat N)\hat N \Vert \\
	&\hspace{-1em}\left\{
	\begin{aligned}
		r &= \hat R\cdot(\vec O-\hat N) \\
		r' &= r+\sqrt{d^2-|r\hat R+\hat N-\vec O|^2}
	\end{aligned}
	\right.
\\
	\hat v &= \frac{r'\hat R +\hat N-\vec O}{d} \qed \\
	m &= \left(\frac{r'+|\vec I|}{(r'+|\vec I|)^{\max}}\right)^2 \qed
\end{align}
\end{subequations}
Mirror dome projection perspective-map formula, where $\vec I$ represents incident vector. $\hat N\in[-1,1]^3$ is the spherical-mirror world-normal and surface position, $\vec P$ is the projector position, $\vec R$ is reflection vector, $r'$ is the reflection distance to dome intersection, as:
\begin{equation}
	\hat R\cdot(\vec O-\hat N)
	= \hat R\cdot\vec O'
	= \cancel{|\hat R|}|\vec O'|\cos\alpha
	= \cancel{|\vec O'|}\frac{r}{\cancel{|\vec O'|}}
	= r
\end{equation}
$\hat v\in[-1,1]^3$ is the visual sphere vector as mirror surface color and $\vec O$ is the dome origin position with $d$ as dome radius. Mirror radius is equal $1$ with its origin at position $[0\quad0\quad0]$. Light dimming mask is represented by $m$ and it's based on inverse-square law approximation. In order to produce perspective map image, first mirror 3D model world-normal pass must be produced, as viewed from projector's perspective. It is possible to render the view with additional perspective map of the projector.
\lstinputlisting[float=p,
	label={lst:mirror dome},
	caption={
		[Mirror dome map]
		Mirror-dome visual-sphere vector $\hat v\in{[-1,1]^3}$ from mirror normal-pass, as seen through projector in GLSL.
	}
]{./glsl/mirror_dome.glsl}
\begin{rem}
	Mirror dome projection system was originally developed by \noun{P. Bourke}.\supercite{Burke2008iDome}
\end{rem}

\subsubsection[Projection mapping]{Projection mapping perspective map}
\label{sub:projection mapping}

\begin{figure}[H]
	\centering\includegraphics{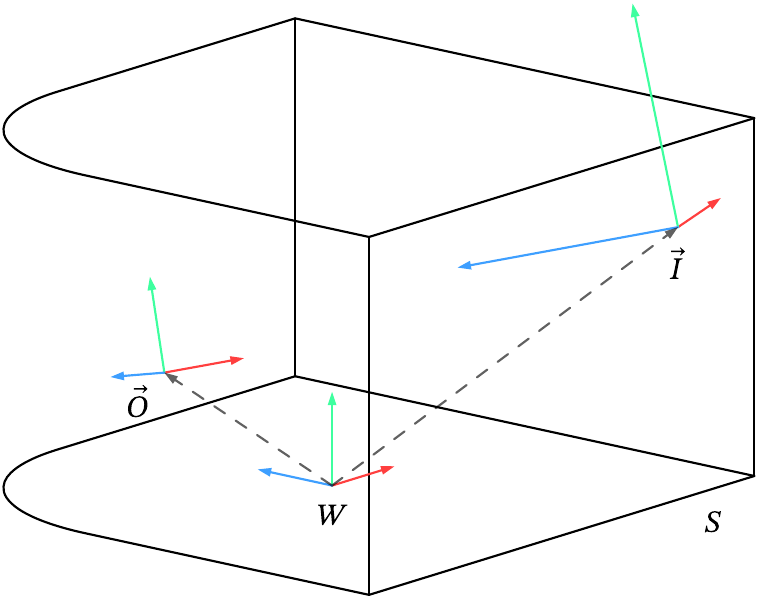}
	\caption[Projection mapping]
		{Projection mapping model, with $W$ as the world origin coordinates, $\vec O$ as observation point, $\vec I$ as projector position and $S$ as projection surface.}
	\label{fig:projection mapping}
\end{figure}

\begin{subequations}
\begin{align}
	\begin{bmatrix}
		\hat v_x \\
		\hat v_y \\
		\hat v_z
	\end{bmatrix}
	&=
	\left\Vert
	\begin{matrix}
		\vec S_x - \vec O_x \\
		\vec S_y - \vec O_y \\
		\vec S_z - \vec O_z
	\end{matrix}
	\right\Vert
	\begin{bmatrix}
		W'_{11} & W'_{12} & W'_{13} \smallskip \\
		W'_{21} & W'_{22} & W'_{23} \smallskip \\
		W'_{31} & W'_{32} & W'_{33}
	\end{bmatrix} \qed \\
	m &= \left(\frac{|\vec S-\vec I|}{|\vec S-\vec I|^{\max}}\right)^2 \qed
\end{align}
\end{subequations}
Projection mapping perspective-map formula, where $\hat v$ represents the visual sphere unit-vector. $\vec S$ is the environment surface position, as seen through projector's point of view. $\vec O$ denotes observation point. $W'$ is the optional rotation matrix for the world position $\vec S$. Light dimming mask is denoted by $m$, which value is based on the inverse-square law. $\vec I$ represents the projector world position.

In order to produce perspective map image, first 3D model of the projection environment must be produced. Then world position map $S$ must be rendered from projector point-of-view. Position map $S$ changes only if projector position $\vec I$ changes, so the map can be reused at each frame even if the observation position is dynamic.
For more about projection mapping, see example \ref{ex:projection mapping} in sub-subsection \vref{sub:perspective shader}.
\lstinputlisting[float=p,
	label={lst:projection mapping},
	caption={
		[Projection-mapping map]
		Visual sphere vector $\hat v\in{[-1,1]^3}$ function from world-position pass as seen through projector in GLSL, for projection mapping.
	}
]{./glsl/projection_mapping.glsl}

\subsubsection[Cube-mapping]{Cube-mapping perspective map}
\label{sub:cube mapping}

\begin{figure}[H]
	\centering\includegraphics{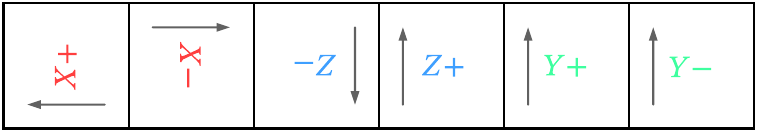}
	\caption[Cube map]
		{Cube map model with face orientation.}
	\label{fig:cubemap}
\end{figure}

\begin{subequations}
\begin{gather}
	\begin{aligned}
		&\begin{cases}
			\hat X = \left[ 1\quad 0\quad 0 \right]\transpose \\
			\hat Y = \left[ 0\quad 1\quad 0 \right]\transpose \\
			\hat Z = \left[ 0\quad 0\quad 1 \right]\transpose
		\end{cases} \\
		&\hspace{1.1em} i = \left\lfloor 6 \vec f_s \right\rfloor
	\end{aligned}
\\
	\begin{bmatrix}
		\hat v_x \\
		\hat v_y \\
		\hat v_z
	\end{bmatrix}
	=
	\left\Vert\begin{matrix}
		(6\vec f_s)\mod 1 -\nicefrac{1}{2} \\
		\vec f_t -\nicefrac{1}{2} \\
		\nicefrac{1}{2}
	\end{matrix}\right\Vert
	\cdot
	\begin{cases}
		\begin{bmatrix}\hat Z & -\hat X & -\hat Y\end{bmatrix}\transpose, & \text{if}\ i=0 \smallskip\\
		\begin{bmatrix}-\hat Z & \hat X & -\hat Y\end{bmatrix}\transpose, & \text{if}\ i=1 \smallskip\\
		\begin{bmatrix}\hat X & -\hat Y & -\hat Z\end{bmatrix}\transpose, & \text{if}\ i=2 \smallskip\\
		\begin{bmatrix}\hat X & \hat Z & -\hat Y\end{bmatrix}\transpose, & \text{if}\ i=4 \\
		\begin{bmatrix}-\hat X & -\hat Z & -\hat Y\end{bmatrix}\transpose, & \text{if}\ i=5
	\end{cases} \qed
\end{gather}
\end{subequations}
Cube-mapping perspective-map formula, where $\hat v$ represents visual sphere vector and $\vec f$ is the screen coordinate. $[\hat Z\quad-\hat X\quad-\hat Y]$ represents rotation matrix of each cube side. Operation $x\mod 1$ is equivalent to $\textbit{fract}(x)$ function.

\subsubsection[Multiple-screen array]{Multiple-screen array map}
\label{sub:screen array}

\begin{figure}[H]
	\centering\includegraphics{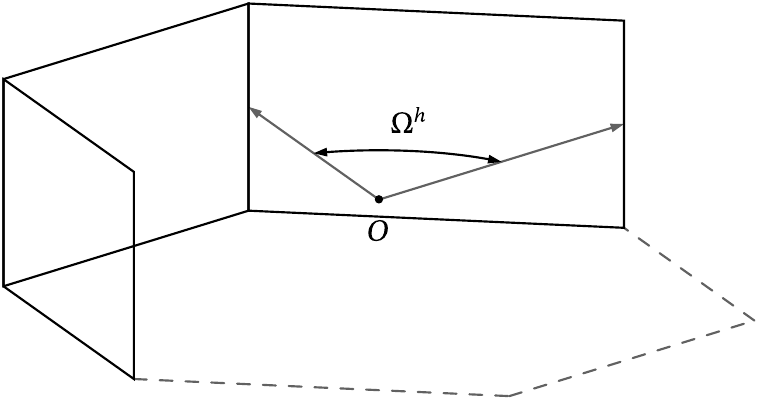}
	\caption[Screen array]
		{Screen array model with $\Omega^h=\nicefrac{2}{6}\,\pi$, and $O$ as observation point.}
	\label{fig:screen array}
\end{figure}

\begin{subequations}
\begin{align}
	i &= \left\lfloor n \vec f_s \right\rfloor +\frac{1-n}{2}
	\\
	\begin{bmatrix}
		\hat v_x \\
		\hat v_y \\
		\hat v_z
	\end{bmatrix}
	&=
	\left\Vert
	\begin{matrix}
		2\big(( n \vec f_s )\mod 1\big) -1 \\
		(2 \vec f_t -1)n \div a \\
		\cot\frac{\Omega^h}{2}
	\end{matrix}
	\right\Vert
	\begin{bmatrix}
		\cos(i\Omega^h) & 0 & \sin(i\Omega^h) \\
		0 & 1 & 0 \\
		-\sin(i\Omega^h) & 0 & \cos(i\Omega^h)
	\end{bmatrix} \qed
\end{align}
\end{subequations}
$n$-screen array perspective-map formula, where $n$ is the number of screens, $\hat v$ represents visual sphere vector, $\vec f$ is the screen coordinates vector, with $\Omega$ and $a$ being a single-screen AOV and aspect ratio, respectively.
\lstinputlisting[float=p,
	label={lst:screen array},
	caption={
		[Horizontal screen array map]
		Visual sphere vector $\hat v\in{[-1,1]^3}$ function from texture coordinates $\vec f\in[0,1]^2$ in GLSL, for $n$--screen-array perspective.
	}
]{./glsl/screen_array.glsl}

\subsubsection[Virtual reality]{Virtual Reality perspective map}
\label{sub:vr}

\begin{figure}[H]
	\centering\includegraphics{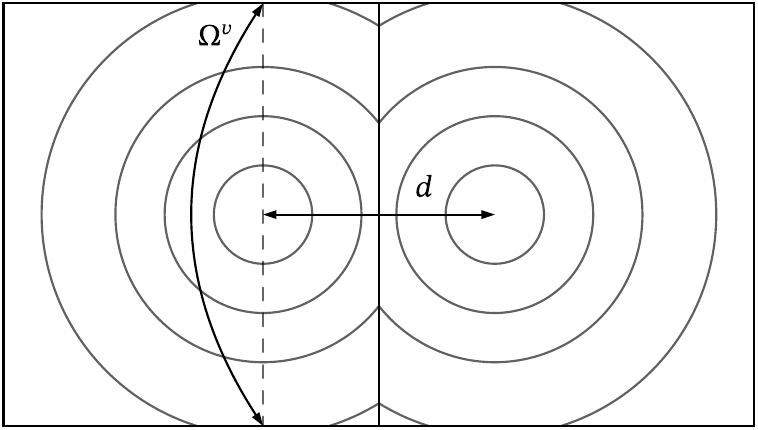}
	\caption[Virtual Reality]
		{Virtual Reality model with $d$ as the IPD distance relative to total screen width. Such that $d=\frac{\text{IPD}}{\text{width}}$.}
	\label{fig:vr}
\end{figure}

\begin{subequations}
\begin{align}
	i &= \text{sign}(\vec f_s-\nicefrac{1}{2})
	\\
	\begin{bmatrix}
		\vec f'_x \\
		\vec f'_y
	\end{bmatrix}
	&=
	\left(\begin{bmatrix}
		2\big((2 \vec f_s)\mod 1\big) -1 \\
		2 \vec f_t -1
	\end{bmatrix}
	+\begin{bmatrix}
		i(1-2d) \\
		0
	\end{bmatrix}\right)
	\begin{bmatrix}
		0.5a \\
		1
	\end{bmatrix}
	\smallskip\\
	\begin{bmatrix}
		\vec f''_x \\
		\vec f''_y
	\end{bmatrix}
	&=
	\begin{bmatrix}
		\vec f'_x \\
		\vec f'_y
	\end{bmatrix}
		\frac
		{\overbrace{
			1+k_1|\vec f'|^2 +k_2|\vec f'|^4 +\hdots +k_n|\vec f'|^{2n}
		}^\text{radial lens distortion}}
		{\underbrace{
			1 +k_1 +k_2 +\hdots +k_n
		}_\text{bounds normalization}}
	\\
	\begin{bmatrix}
		\hat v_x \\
		\hat v_y \\
		\hat v_z
	\end{bmatrix}
	&=
	\left\Vert \begin{matrix}
		\vec f''_x \\
		\vec f''_y \\
		\cot\frac{\Omega^v}{2}
	\end{matrix} \right\Vert \qed
\end{align}
\end{subequations}
Virtual reality perspective map formula, where $\vec f$ represents screen coordinates, $d$ is the interpupillary distance (IPD) in screen-width scale, $a$ is the screen aspect ratio, $k_1,k_2,\hdots,k_n$ represent lens-distortion coefficients. $\hat v$ is the visual sphere vector and $\Omega^v$ is the vertical AOV.
\lstinputlisting[float=p,
	label={lst:vr},
	caption={
		[Virtual reality map]
		VR visual-sphere vector $\hat v\in{[-1,1]^3}$ from texture coordinates $\vec f\in[0,1]^2$ in GLSL. The \textbit{ipd} variable is expressed in screen-width scale.
	}
]{./glsl/vr.glsl}

\begin{rem}
	VR perspective map can also be combined side-by-side with a regular perspective map. In extended desktop environment, extra perspective map could provide a monitor preview of the VR content, without an additional render call.
\end{rem}

\subsection[Lens distortions]{Lens distortion of perspective picture}
\label{sub:lens distortions}

Creating perspective picture of a real optical system may require additional deformation of the vector data. Most commonly used algorithm for this purpose is the \emph{Brown-Conrady} lens distortion model.\supercite{Wang2008}

\begin{subequations}
\begin{align}
	r^2 &= \vec f \cdot \vec f
	\\
	\begin{split}
		\begin{bmatrix}
			\vec f'_x \\
			\vec f'_y
		\end{bmatrix} &=
			\begin{bmatrix}
				\vec f_x \\
				\vec f_y
			\end{bmatrix}
			+
			\overbrace{
			\left( k_1r^2 + k_2r^4 +\cdots+ k_nr^{2n} \right)
			\begin{bmatrix}
				\vec f_x \\
				\vec f_y
			\end{bmatrix}
			}^\text{radial distortion}
		\\
			&+
			\underbrace{
			\left(
				\begin{bmatrix}
					p_1 \\
					p_2
				\end{bmatrix}
				\cdot
				\begin{bmatrix}
					\vec f_x \\
					\vec f_y
				\end{bmatrix}
			\right)
			\begin{bmatrix}
				\vec f_x \\
				\vec f_y
			\end{bmatrix}
			}_\text{thin prism}
			+
			\underbrace{
			\begin{bmatrix}
				q_1r^2 \\
				q_2r^2
			\end{bmatrix}
			}_\text{decentering}
	\end{split} \qed
\end{align}
\end{subequations}
Where $r$ is the dot product of two $\vec f$ vectors.
$k_1,k_2$ and $k_n$ are the radial distortion coefficients.
$q_1$ and $q_2$ are the decentering coefficients.
Thin prism distortion coefficients are denoted by $p_1$ and $p_2$.
\lstinputlisting[float=p,
	label={lst:lens distortion},
	caption={
		[View coordinates lens distortion]
		Lens distortion function for transformation of texture coordinates $\vec f\in[0,1]^2$ to view coordinates $\vec f'\in\real^2$ in GLSL.
	}
]{./glsl/lens_distortion.glsl}

\section[Variable no-parallax point]{No-parallax point mapping}
\label{sec:npp}

Real optical systems exhibit phenomenon known as the floating no-parallax point,\supercite{Littlefield2006NoParallaxPoint} where each incident vector originates from different position within the lens. Meaning that real optical systems are not complicit with pinhole projection model. In pinhole model all incident vectors share same crossing point.
Therefore to simulate optical projection, view position has to change accordingly to incident vector origin of the calibrated lens, or in contrary, visible point should move the opposite way.
In spherical lens, NPP offset is in $z$-direction and can be described as a product of a function $\text{parallax}(\theta)$. As it changes accordingly to an angle between incident vector and the optical axis. Offset value can be approximated by optical measurement of parallax miss-alignment (see figure \vref{fig:Floating-no-parallax-point}).

To calibrate lens distortion with floating NPP, first static NPP picture must be obtained. If camera lens does not produce such image, it can be derived from a sequence of images, each taken at different $z$ position (see subfigure \vref{fig:npp image sequence z travel}).
In such image composite, every point of the picture plane shares common incident vector origin (in world-space). From this, two lens maps can be derived, \emph{parallax offset map} from sequence's pixel $z$ position and \emph{perspective map} from composite image of calibration rig/chart.

To render picture with floating NPP, each 3D point must be transformed prior to rasterization. Transformation is done accordingly to the perspective map incident position and associated parallax offset value. Either by moving the point or the view position. Noticeably the parallax offset value can also be encoded in a graph (for simple spherical lenses), instead of a texture map.

Rendering with rasterization would produce approximate result as values in-between vertices are interpolated, not transformed. Therefore best quality floating-NPP result is to be expected from ray-tracing. In such case, offset of a ray-origin-position should be performed. An equivalent of visual-sphere offset (see subfigure \vref{fig:npp offset model}).
Going back to rasterization, point offset can be achieved through projection of the \emph{parallax offset} texture map onto the scene geometry. Each geometry point can be transformed in view-space $z$ direction by a value from projected \emph{parallax offset} texture.

\section{Appendix}
\label{sec:appendix}

Perspective picture visible inside the visual space gives some sense of immersion (e.g. picture, film, computer game) even without visual illusion.\supercite{Dixon1987paradox} That's because it is perceived as a visual symbol of an abstract point of view, through which not picture plane is seen, but depicted space's mind-reconstruction. The picture immersion does not break, as long as appearance of the objects do not exhibit too much deformation.
Perceiving abstract point of view invokes separation from the surrounding. To enhance immersion, environment stimuli is being reduced. In a movie theater, to uphold the immersion lights are turned off and silence is expected. Horror-gameplay session are usually played at night, to separate from safe-space of home. This approach focuses virtual presence on depicted space.
\begin{rem}
	On the opposite side, picture as an integral part of the surrounding can be categorized under the \emph{Trompe-l'{\oe}il} technique.\supercite{Wade1999FoolingEyes}
\end{rem}
	\noindent Through the perception of the picture, physical properties of depicted space and objects within it are estimated. Mind fills the gaps, as symbols are always simplified versions of the real thing.
\begin{example*}
	Cup of coffee laying on a table has only one side visible at once, but it can be assumed that the opposite one is there too, since shape of the cup is known. This is a blank information filled by the mind.
\end{example*}
	\noindent Physical objects have their physical properties, but their visual symbols have some physical properties too, like angular size, perspective, shadow, etc. Those visual properties give some information about physical. In case of perspective, visual properties give information about depicted space and about point of view.
	Since most of the time picture represents a point of view (e.g. film, video game, visualization), it is wise to consider subject's properties of vision when designing picture perspective. But instead of producing mechanical simulation, perspective should symbolize total sensory experience.\supercite{Argan1946BrunelleschiPerspective}
\begin{thm}
	To create immersive visual symbol of a visual space, it is necessary to use curvilinear perspective instead of a linear.
\end{thm}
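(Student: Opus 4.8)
The plan is to read this statement not as an algebraic identity but as a \emph{necessity} claim about the choice of modelling primitive, and to argue it by elimination: I would show that any projection satisfying the immersion criterion laid out in this appendix — that depicted objects must not ``exhibit too much deformation'' across the whole field attributed to the represented point of view — cannot be the linear (pinhole) projection once the angle of view reaches the human visual span, so it must be curvilinear. The argument therefore has three moving parts: fixing the premise, ruling out the linear map quantitatively, and exhibiting a curvilinear map that works.

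First I would make the premise precise. ``Immersive visual symbol of a visual space'' I would unpack as the conjunction of (i) the picture is perceived as an abstract point of view through which the depicted space is mind-reconstructed, and (ii) that reconstruction stays stable, i.e. marginal object deformation remains bounded, over the entire field the viewer ascribes to that point of view. Condition (ii) is exactly the ``immersion does not break'' condition stated two paragraphs above the theorem; condition (i) together with the opening paragraph and the appeal to ``total sensory experience'' pins down that the relevant field is the subject's field of vision, which is near-hemispherical and in practice wider than any acute angle — so the symbol must be defined over (a large cap of) the visual sphere of Section \ref{sec:visual sphere}, not over an acute visual pyramid.

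Second, I would invoke the rectilinear map of Section \ref{sub:linear perspective}: there the depth component is $\hat v_z \propto \cot(\Omega/2)$, so the map is only defined for $\Omega < \pi$, and the image radius of a point at field angle $\theta$ scales as $\tan\theta$, whose derivative $\sec^2\theta$ diverges as $\theta \to \pi/2$. Hence under linear perspective peripheral marginal magnification is unbounded — objects near the edge are stretched without limit, violating (ii) — and a full $360\degree$ symbol is impossible outright, violating (i) for the true visual field. Then I would point to the cosine-based universal model of Section \ref{sub:universal}: an azimuthal projection with $k \le 0$ keeps $R = \sin(k\theta)/\sin(k\Omega/2)$ bounded and admits $\Omega$ up to $2\pi$, so a projection satisfying both (i) and (ii) exists and is necessarily non-linear; the statement follows.

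The main obstacle is not any of the computations — those are a one-line derivative and a glance at $\Omega_{\max}$ — but the legitimacy of premise (i)--(ii) as a faithful formalization of ``immersive visual symbol of a visual space.'' The inference from ``total sensory experience'' to ``defined over the near-hemispherical visual sphere rather than an acute pyramid'' rests entirely on the perceptual and historical reading developed in this appendix, not on geometry. So the honest version of the proof will be a geometric elimination argument wrapped around a perceptual hypothesis, and its rigour is exactly the rigour of that hypothesis; I would flag this dependence explicitly rather than pretend the conclusion is purely mathematical.
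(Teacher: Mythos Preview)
Your proposal is correct and its core coincides with the paper's argument: the paper's proof is the single observation that linear (tangent-based) perspective is capped strictly below $180\degree$ of view, whereas the human binocular visual field extends to roughly $220\degree$ horizontally, so a symbol of that visual space cannot be linear. This is exactly your condition~(i) together with your appeal to $\cot(\Omega/2)$ in the rectilinear map.

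Where you differ is in how much machinery you bring. The paper stops at the domain obstruction and does not, inside the proof, invoke the divergence of marginal magnification ($\sec^2\theta\to\infty$), nor does it exhibit a curvilinear projection that succeeds; those points appear elsewhere in the appendix (the \emph{Leonardo Paradox} discussion and the radial-compression charts) but are not part of the formal proof. Your condition~(ii) and your existence step therefore strengthen the argument --- they show that linear perspective fails on \emph{quality} well before it fails on \emph{domain}, and that the alternative is non-vacuous --- but they are more than the paper itself uses. Your closing caveat, that the rigour of the conclusion is bounded by the rigour of the perceptual premise, is well taken and arguably more honest than the paper's own one-line proof.
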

\begin{proof}
	Geometry of human visual space contradicts linear perspective principle, as visual field extends beyond linear perspective angle of view. Linear perspective, based on a tangent of an angle, exhibits limit of 179.(9)8\textdegree\ of view. While visual field extends horizontally up to 220\textdegree\ for binocular vision.\supercite{Hueck1840vision}
\end{proof}

\begin{figure}[H]
	\centering
	\begin{subfigure}{162pt}
		\includegraphics{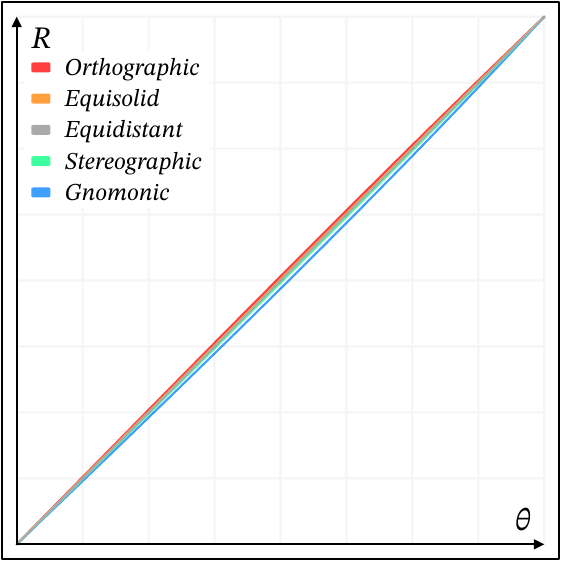}
		\caption{Graph plotting ray of angle $\theta\in[0\degree,\Omega]$ --- as the horizontal axis, and screen-position radius $R\in[0,1]$, as the vertical axis, where $\Omega=40\degree$ which is equivalent to $R=1$.}
		\label{fig:Graph-low-AOV}
	\end{subfigure}
	\hfill
	\begin{subfigure}{162pt}
		\includegraphics{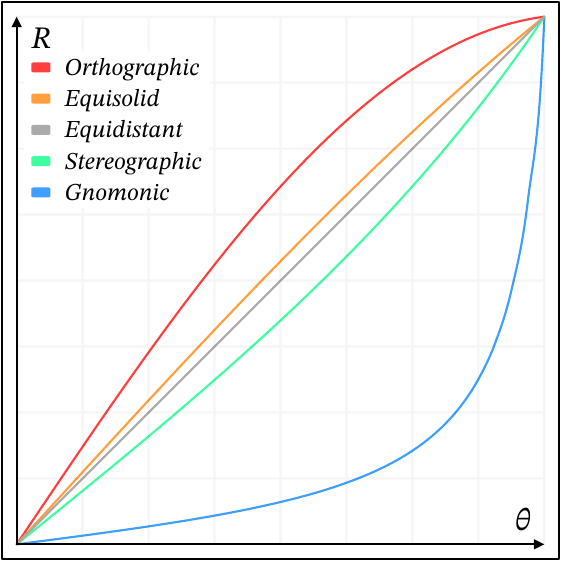}
		\caption{Graph plotting ray of angle $\theta\in[0\degree,\Omega]$ --- as the horizontal axis, and screen-position radius $R\in\left[0,1\right]$, as the vertical axis, where $\Omega=170\degree$ which is equivalent to $R=1$.}
		\label{fig:Graph-high-AOV}
	\end{subfigure}

	\caption[Radial compression charts]{Chart comparison of radial compression in five major azimuthal projections, across two different AOV ($\Omega$) values: narrow (\ref{fig:Graph-low-AOV}) and wide (\ref{fig:Graph-high-AOV}).}
	\label{fig:Radial-compression-chart}
\end{figure}

\noindent At narrow AOV both types of perspective are suitable for immersive picture. In such case differences between each projection are negligible. Subfigure \vref{fig:Graph-low-AOV} presents those differences in comparison to five major perspective projections. Same differences seem exaggerated at higher AOV values (see subfigure \ref{fig:Graph-high-AOV}).
\begin{cor}
	Practical limit for immersive picture in linear perspective is between 60\textdegree\ and 110\textdegree\ AOV. Wider-angles exhibit deformations known as the \emph{Leonardo Paradox}, \supercite{Dixon1987paradox} which are then dominant in the image perception and break picture's immersion.\qed
\end{cor}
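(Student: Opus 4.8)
The plan is to make this Corollary quantitative by thresholding the \emph{marginal} deformation of the rectilinear map against the perceptual tolerance already implied by the Theorem above and the \emph{Leonardo Paradox} literature; since those are available, all that remains is to pin the window $[60\degree,110\degree]$. First I would take the gnomonic radial law $R=\tan\theta/\tan(\Omega/2)$ from the rectilinear-perspective subsection and linearise the map near an off-axis point at eccentricity $\theta$. Relative to the paraxial (centre) scale, the two principal local scale factors are the radial one $m_r(\theta)=\sec^2\theta$ and the tangential one $m_t(\theta)=\sec\theta$, so the shape-anisotropy is $m_r/m_t=\sec\theta$ and the areal magnification is $m_r m_t=\sec^3\theta$; all three blow up as $\theta\to\nicefrac{\pi}{2}$, which already recovers the "$179.(9)8\degree$" obstruction of the Theorem.

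Next I would identify the worst-case eccentricity of a rectangular frame of aspect ratio $a$ as the half-diagonal angle $\theta_{\max}$, which satisfies $\tan\theta_{\max}=\tan(\Omega^h/2)\sqrt{1+a^{-2}}$ and hence exceeds $\Omega^h/2$. Then I would invoke the empirical fact underlying the \emph{Leonardo Paradox} — that off-axis stretching becomes the dominant percept only once the anisotropy $\sec\theta_{\max}$ grows to order $2$, yet stays essentially unnoticeable while it remains below about $1.2$ — so the transition to "immersion-breaking" cannot occur below roughly $\theta_{\max}\approx 30\degree$ nor remain tolerable much past $\theta_{\max}\approx 55\degree$. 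Substituting these two eccentricities back through the aspect-ratio relation for the usual ratios $4{:}3$, $3{:}2$, $16{:}9$ places the horizontal-AOV transition numerically inside $[60\degree,110\degree]$, which I would cross-check against the radial-compression charts (see figure \vref{fig:Radial-compression-chart}), where the five projections are visually indistinguishable at $\Omega=40\degree$ but grossly divergent at $\Omega=170\degree$. Outside the window the marginal deformation dominates, contradicting the immersion criterion of the discussion preceding the Theorem; inside it, linear perspective remains admissible — which is the assertion.

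The main obstacle I expect is that the Corollary is only half geometric: neither endpoint is forced by any inequality internal to the model, so the argument must (a) commit to a specific deformation functional to threshold — anisotropy $\sec\theta$, radial magnification $\sec^2\theta$, or areal magnification $\sec^3\theta$ all give the same qualitative window but different exact numbers — and (b) import the perceptual cutoff for that functional from the vision and art-history sources rather than derive it. The aspect-ratio arithmetic converting $\theta_{\max}$ into $\Omega^h$ is routine; the delicate step is arguing that the chosen threshold is perceptually the right one and stable enough across aspect ratios to collapse to a single clean interval rather than a whole family of them, which is why the statement is phrased as a practical range and not a sharp bound.
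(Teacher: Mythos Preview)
The paper does not prove this Corollary at all: the end-of-proof marker is appended directly to the statement, and the claim is left to stand on the preceding qualitative discussion (the radial-compression charts at $\Omega=40\degree$ versus $\Omega=170\degree$) together with the cited source on the Leonardo Paradox. There is no derivation of the $60\degree$--$110\degree$ window anywhere in the text; the numbers are asserted as a practical observation.

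Your proposal therefore does substantially \emph{more} than the paper. The differential-geometric scaffolding you set up --- radial magnification $\sec^2\theta$, tangential magnification $\sec\theta$, anisotropy $\sec\theta$ for the gnomonic map, and the conversion from corner eccentricity $\theta_{\max}$ back to $\Omega^h$ via the aspect ratio --- is correct and is a legitimate way to make the statement quantitative. You are also right, in your final paragraph, that the endpoints cannot be pinned without importing a perceptual tolerance from outside the model; that is precisely why the paper treats the Corollary as an empirical remark rather than something to be demonstrated. So your approach is sound as a justification, but be aware that you are not reconstructing a hidden argument --- you are supplying one the author never gave.
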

\noindent To show wider-angle picture it is necessary to use curvilinear projection. But there is a tendency to see the world not through sight but understanding. We understand that the wall is flat, therefore we see it that way. Picture projected into the eye is just a visual symbol and has its own physical properties (e.g. perspective and shape). Therefore its visual representation is curvilinear, where the curvature symbolizes wider field of view. Reader can validate curvilinear nature of human visual space,\supercite{Baldwin_2014_VF_Perspective,Erkelens2015VSPerspective} by following \noun{A. Radley} experiment\supercite{Radley2014book}:

\begin{quote}
	\textitquote{Also when you have a moment, get a 30 cm ruler (...), and whilst looking forward bring it close to the bottom of your nose, and notice how its shape at the outer edges curves upwards and forwards. It may take you a few minutes to be able to see this effect, because you are so accustomed to not noticing it ! But once you do you will be amazed to see your curved field of view as it really is for the first time.} --- \noun{A. Radley}
\end{quote}

\subsection[Symbolic picture]{Visual space symbolic picture}
\label{sub:symbolic picture}

Since we came into conclusion that symbol of a visual space is curvilinear, there is a task of selecting between many non-linear projections. Each has properties that symbolize subject's perception or information span about depicted space. Not appearance of the symbol should dominate the picture, but projected information about point of view and depicted space properties.
\begin{problem}
	Which curvilinear perspective is best for visual symbol of visual space?
\end{problem}
\begin{prop}
	A model based on anamorphic lens geometry; a mix between fish-eye, panini and anamorphic projection.
	\begin{description}
		\item [fish-eye] as it can represent wider AOV than linear perspective (e.g. $\pi$) and conforms to the curvilinear nature of VS. Gives natural spatial awareness.
		\item [panini] to symbolize binocular vision; two spherical projections combined into one panoramic image.\footnote{Effect is also referred to as \emph{Stereopsis}.} Produces picture geometry more familiar to the viewer.
		\item [anamorphic] as cylindrical projection, like \emph{Panini}, tends to elongate proportions vertically; there is a need for correction. Correction should make object in focus proportional as it varies depending on position and size.
		\begin{rem}
			Only linear anamorphic correction will conform to the perspective picture definition (see \vpageref{def:conservation of perspective}).
		\end{rem}
	\end{description}
	\begin{rem}
		Equations in subsection \vref{sub:universal} (about perspective transformation) presents variables that drive all mentioned above geometrical factors.
	\end{rem}
\end{prop}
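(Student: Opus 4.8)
The plan is to argue by \emph{synthesis} rather than by contradiction: I would show that each of the three named ingredients is forced by a separate perceptual requirement, and then exhibit a single parametric family --- the universal perspective model of subsection~\ref{sub:universal} --- that realises all three at once, with the anamorphic part constrained to be affine. First I would invoke the Theorem above: since the binocular visual field overruns the $179.(9)8\degree$ ceiling of any tangent-based map and, by the Corollary, linear perspective remains immersive only on roughly $60\degree$--$110\degree$, a faithful symbol of visual space must be an azimuthal, $\theta$-based projection whose horizon may approach or reach $2\pi$. The azimuthal sweep of subsection~\ref{sub:universal} --- gnomonic through orthographic as $k$ runs over $[-1,1]$ --- is precisely the one-parameter deformation that keeps radial compression monotone while admitting $\Omega_{\max}$ up to $2\pi$ once $|k|\le\nicefrac{1}{2}$; this establishes the fish-eye component and pins down the role of $k$.

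Next I would treat the remaining two ingredients. Binocular vision fuses two overlapping monocular spheres into one horizontally-extended field, whose natural symbol is a projection that stays spherical vertically but is developed along a cylinder horizontally --- exactly the $l=0$ limit of the universal map (note the factor $R=\sqrt{\vec f_x^2+l\vec f_y^2}$ in the transform of subsection~\ref{sub:2D-3D transform}), with $l\in(0,1)$ interpolating toward the fully spherical $l=1$ case, and this is also the picture geometry viewers already recognise from panoramas. Because any cylindrical development stretches vertical extents away from centre, a correction is needed to keep the focal subject proportionate; in the universal model this is the single scalar $s\in[\nicefrac{4}{5},1]$, entering the transform only through the divisor $l(1-s)+s$ on the $y$ component. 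The remark attached to the statement insists this correction be \emph{linear}, which I would discharge by noting that a nonlinear vertical rescaling reintroduces the point-to-line blow-up already flagged for non-linear maps in subsection~\ref{sub:universal}, whereas an affine scaling preserves the conservative-perspective property.

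To close the argument I would observe that the four knobs $\{\Omega,k,l,s\}$ of the 2D$\rightarrow$3D transform of subsection~\ref{sub:2D-3D transform} are then in bijection with the four requirements just enumerated: AOV past $\pi$ (the choice of $\Omega$), fish-eye/curvilinear character (the choice of $k$), Panini-like binocular symbolism (the choice of $l$), and a proportionate focal subject (the choice of $s$). This is exactly the content of the final remark of the statement, so the universal model of subsection~\ref{sub:universal} serves as a witness for the proposition.

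The main obstacle is that the statement is a design proposition, not a formal identity, so the argument can only be as firm as its empirical premises --- curvilinearity of the visual field, the familiarity of Panini geometry, and the perceptual primacy of a proportionate focal subject. The genuinely checkable half, namely that the universal model simultaneously contains a fish-eye sweep, a cylinder-to-sphere interpolation, and an affine anamorphic term, and that these act independently, reads straight off the equations of subsection~\ref{sub:2D-3D transform}. The hard part is the rest: arguing that these four degrees of freedom are not merely \emph{sufficient} but \emph{minimal} --- that dropping any one loses an essential feature of the visual-space symbol, and that no fifth parameter is needed to express it.
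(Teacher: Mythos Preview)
The paper supplies no separate proof for this proposition: it is a design \emph{proposal} answering the preceding \texttt{problem} environment (``Which curvilinear perspective is best for visual symbol of visual space?''), and its entire justification is the inline description list you already see in the statement --- the fish-eye, Panini, and anamorphic items each carry a one-line rationale, followed by the two remarks pointing to Definition~\ref{def:conservation of perspective} and to the universal model of subsection~\ref{sub:universal}. There is nothing further to compare against.

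Your write-up is therefore not so much a different route as an amplification of material the paper leaves implicit. Where the paper simply asserts ``fish-eye \dots\ can represent wider AOV'' and ``panini \dots\ symbolize binocular vision,'' you trace these back to the earlier Theorem/Corollary pair and to the $R=\sqrt{\vec f_x^{2}+l\vec f_y^{2}}$ factor in the 2D$\rightarrow$3D transform, and you make explicit the bijection $\{\Omega,k,l,s\}\leftrightarrow\{\text{AOV},\text{fish-eye},\text{Panini},\text{anamorphic}\}$ that the paper only gestures at in its final remark. That is a genuine gain in clarity. Your own closing caveat is exactly right: the checkable half --- that the universal model contains all three ingredients independently --- is read off the equations, while the perceptual premises remain empirical. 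The minimality claim you flag as ``the hard part'' is not something the paper attempts either, so you should not feel obliged to supply it.
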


\subsection[Visual Sphere]{Visual sphere as a whole image}
\label{sub:visual sphere}

Common idea of an image is limited to a finite 2-dimensional plane. Which is subjective, due to constrains of human visual field and up-front placement of eyes. One can construct a rectangular frame, which at certain distance from the eyes will cover full visual field (VF). In case of some animals (e.g. horse, rabbit), visual space confines much wider VF. With only few blind spots, spanning to almost 360\textdegree AOV.\supercite{Murphy2010HorseVision,Bagley2013RabbitFOV} Such field cannot be enclosed by a single rectangular frame. Thus image nature is not of a frame. Another model has to be chosen instead. One able to cover full $\Omega=360\degree$ is a sphere.
\begin{rem}
	Cylindrical projection cannot cover full 360\textdegree AOV in all directions. It is a hybrid between frame and spherical model. When vertically-oriented, full $\Omega^v<180\degree$.
\end{rem}
All three-dimensional space around given observation point, can be projected onto a sphere, with given observation point as a origin. Even doe sphere itself is a 3D object, its surface (as well as image nature)\supercite{Rybczynski2009teatrise} is two-dimensional. Therefore creating perspective picture is a matter of representing portion of the visual sphere on a flat surface; a fundamental topic in cartography.
Concept of a sphere as a model of visual space goes back as far as 300 BC, where Greek mathematician \noun{Euclid} seem first to mention (others are \noun{L. Da Vinci} and \noun{F. Aguilonius}).\supercite{Tyler2009Euclid}
\begin{rem}
	Each projection of sphere onto a flat surface is a compromise and can preserve only some properties (e.g. shape, area, distance or direction), which in case of perspective picture relates to some symbolic information about physical space.
\end{rem}
\begin{defn}
	Let us define perspective picture as the azimuthal projection, where lines converging at optical axis vanishing point remain straight, that \emph{conservation of perspective} may occur (see definition \vpageref{def:conservation of perspective}).\label{def:azimuthal perspective}
\end{defn}

\subsubsection[Azimuthal projections]{Physical space properties preserved in azimuthal projections}
\label{sub:azimuthal projections}

Below are presented static properties of five major azimuthal projections. Properties of motion can be found in sub-subsection \vref{sub:perspective in motion}. It is important to know which symbolic information about space is carried in each perspective projection, so that design choice for perspective geometry may be conscious.
\begin{description}
	\item [Gnomonic] (rectilinear) projects all great circles as straight lines, thus preserving directions. For 3D projection, straight lines in object-space remain straight. It does not preserve proportions, angles nor area or distances (see subfigure \vref{fig:Rectilinear}). Extreme distortion occurs away from the center, in a form of radial stretch (see \emph{Leonardo Paradox})\supercite{Dixon1987paradox}. AOV $\Omega\in(0,\pi)$.
	\begin{example*}
		Most common perspective type in painting, 3D graphics and architectural visualization. Sometimes it is used to overemphasize building appearance by leveraging \emph{Leonardo Paradox}.\supercite{Dixon1987paradox} Wide AOV combined with lowered optical center creates an effect of acute corners. This produces an extraordinary look. Such visual-trick may confuse the public, as experience of symbolic picture won't match building visual-space appearance.
	\end{example*}
\end{description}
\begin{description}
	\item [Stereographic] (conformal) preserves angles (at line intersection point). There is no perceivable radial compression, thus smaller figures retain their shape. It does not preserve distances (non-isometric), nor angular surface area. For 3D projection, most important factor is the conservation of proportions (see subfigure \vref{fig:Stereographic}). AOV $\Omega\in(0,2\pi)$.
	\begin{example*}
		In a picture with stereographic projection, face of the actor keeps its shape and proportions, even at wide AOV. This projection also gives best spatial-awareness sensation (where visual cues are available).
		Good use case is navigation through tight spaces and obstacles.
	\end{example*}
\end{description}
\begin{description}
	\item [Equidistant] preserves angular distance from the center point (see subfigure \vref{fig:Equidistant}). For 3D projection, angular speed of motion is preserved. Radial compression remains low-to-moderate at extreme $\Omega$ angles. AOV $\Omega\in(0,2\pi]$.
	\begin{example*}
		This projection is recommenced for target aiming or radar map navigation, where all targets are projected onto a Gaussian Sphere.
	\end{example*}
\end{description}
\begin{description}
	\item [Equisolid] preserves angular area. Gives good sensation of distance (see subfigure \vref{fig:Equisolid}). Radial compression is moderate up to $\pi$. Near maximum $\Omega$, compression is high. AOV $\Omega\in(0,2\pi]$.
	\begin{example*}
		When there are no spatial cues, this is best projection for putting emphasis on the distance to the viewer.\supercite{Glaeser1999CurvedPerspectiveVR} Good use case is flight simulation, where only sky and other aircraft are in-view.
	\end{example*}
\end{description}
\begin{description}
	\item [Orthographic] preserves brightness. It is a parallel projection of a visual hemisphere. Has extreme radial compression, especially near $\pi$ (see subfigure \vref{fig:Orthographic}). AOV $\Omega\in(0,\pi]$.
	\begin{example*}
		Most commonly found in very cheap lenses, like the peephole door viewer.
		Thanks to brightness being proportional to occupied image area, it found common use in sky photography and scientific research.\supercite{Nikon2020,Thoby2012FisheyeLens}
	\end{example*}
\end{description}

\subsection[Motion and perspective]{Image geometry and sensation of motion}
\label{sub:motion}

Picture perspective affects the way motion is perceived. It can enhance certain features, like proportions and shapes, movement or spatial-awareness. It can also guide viewer attention to a specific region of image (e.g. center or periphery). Knowledge about those properties is essential for conscious image design.

\begin{figure}[H]
	\centerline{
		\begin{subfigure}[t]{212pt}
			\includegraphics{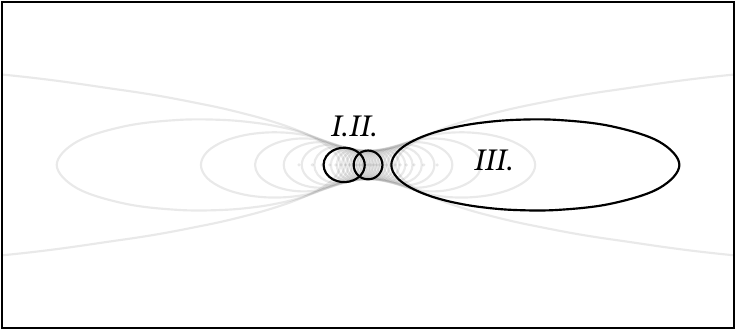}
			\caption{Rectilinear (Gnomonic) projection}
			\label{fig:Rectilinear}
		\end{subfigure}
		\hspace{0.382em}
		\begin{subfigure}[t]{212pt}
			\includegraphics{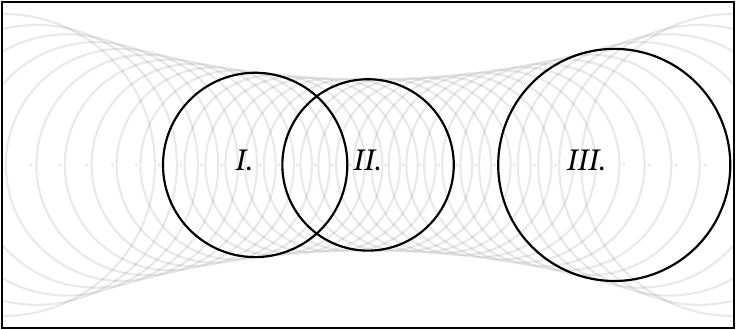}
			\caption{Stereographic projection}
			\label{fig:Stereographic}
		\end{subfigure}
	}\vspace{1em}
	\centerline{
		\begin{subfigure}[t]{212pt}
			\includegraphics{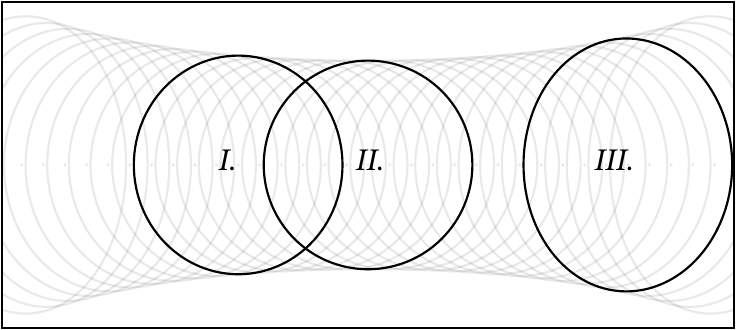}
			\caption{Equidistant projection}
			\label{fig:Equidistant}
		\end{subfigure}
		\hspace{0.382em}
		\begin{subfigure}[t]{212pt}
			\includegraphics{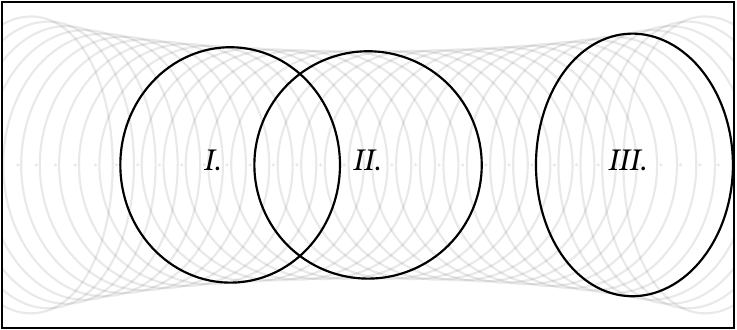}
			\caption{Equisolid projection}
			\label{fig:Equisolid}
		\end{subfigure}
	}\vspace{1em}
	\centerline{
		\begin{subfigure}[t]{212pt}
			\includegraphics{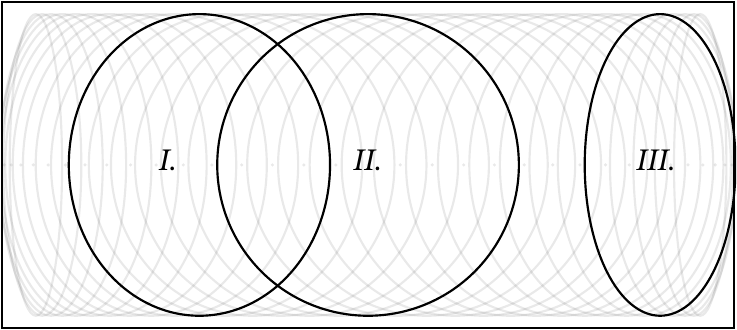}
			\caption{Orthographic-azimuthal projection}
			\label{fig:Orthographic}
		\end{subfigure}
		\hspace{0.382em}
		\begin{subfigure}[t]{212pt}
			\includegraphics{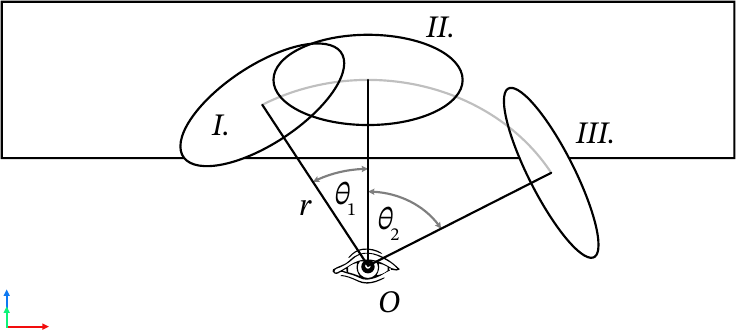}
			\caption{Model of examples; an equal-size discs ($I$, $II$, $III$) at equal distance $r$, from and facing observation point $O$, where $\Omega^h=170\degree$, $\theta_1=30\degree$ and $\theta_2=60\degree$}
			\label{fig:Model}
		\end{subfigure}
	}
	\caption[Motion in various azimuthal projections]{Example of motion in perspective picture of various azimuthal projections. Here subfigure \ref{fig:Model} presents viewed elements layout.}
	\label{fig:Perspective-example}
\end{figure}

\subsubsection{Attention focusing}
\label{sub:attention focusing}

In film design, there are several techniques that focus viewer attention on a specific portion of the picture (like \emph{motion}, \emph{color}, \emph{light} and \emph{composition}). Attention focusing through composition and motion is related to picture's perspective, as its geometry can compress and stretch the image. In composition, \emph{rule of thirds} states that viewer attention focuses on four corners of a rectangle produced by division of an image into three, equally-sized rows and columns. In motion, attention generally drives toward objects approaching the camera or those growing in scale. Attention also focuses on objects entering image frame. Same rules apply loosely in reverse, as attention suspense.
Filmmakers tend to frame the image so that region of interest lays in accordance to the \emph{rule of thirds}.
In case of computer games, region of interest is usually located at the very center, thus viewer must overcome the \emph{principle of thirds} and some properties of \emph{linear perspective} in order to switch attention to that region. In order to focus on the center, games usually incorporate some non-diegetic elements, like crosshair. Such approach may lower immersiveness of symbolic picture.\supercite{Casamassina2005KingKong}

\subsubsection[in motion]{Attention focusing motion in perspective}
\label{sub:perspective in motion}

Radial stretching and compression are the main attention focusing factors of perspective projection. They give subconscious sensation of movement towards camera, and can amplify figure screen-relative speed of motion.
\begin{description}
	\item [{Gnomonic}] (rectilinear), due to extreme radial stretch attention drives towards periphery. When approaching image bounds figure grows in scale and speed (see subfigure \vref{fig:Rectilinear}). This combined with motion-sensitive peripheral vision adds to the effect. At wider AOV amplified motion breaks immersion of symbolic picture.

	\item [{Stereographic}] also draws attention towards periphery. Figures grow in scale near bounds, but immersion does not break, as proportions are preserved, even at wide AOV (see subfigure \vref{fig:Stereographic}).

	\item [{Equidistant}] drives attention toward center, as figures in periphery are radially compressed (see subfigure \vref{fig:Equidistant}). This projection preserves screen-relative, radial speed of motion, making it uniform and representative across the picture.

	\item [{Equisolid}] also drives attention toward the center, as radial compression is even greater (see subfigure \vref{fig:Equisolid}). Figure speed of motion in screen-space slightly declines towards periphery.

	\item [{Orthographic}] has extreme radial compression that breaks immersion of a symbolic picture (see subfigure \vref{fig:Orthographic}). When in motion, image seem to be imposed on an artificial sphere.
\end{description}
\emph{Gnomonic} and \emph{Orthographic} projections are the two extremes of azimuthal spectrum. They both are least suited for an immersive picture. \emph{Cylindrical} perspective type, while symbolizing binocular vision, also gives visual cue for the vertical axis orientation. Such cue is undesirable in case of camera roll-motion, or when view is pointing up/down, as image vertical axis will then not be aligned with depicted space orientation. In such case perspective geometry should transition to \emph{spherical} projection (see figure \vref{fig:Example-motion}).

\begin{figure}[H]
	\centerline{
		\begin{subfigure}[t]{162pt}
			\includegraphics{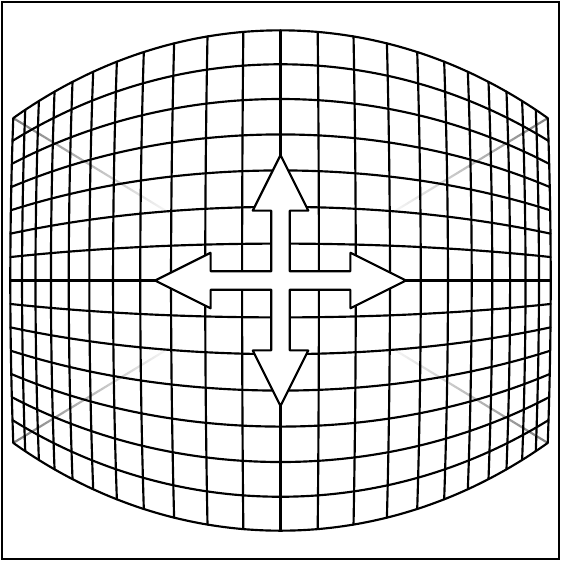}
			\caption[Pitch and yaw motion image geometry example]
				{Example of image geometry for pitch/yaw motion (arrows), where $\Omega^h=120\degree$, $k=0$, $l=10\%$ and $s=95\%$.}
			\label{fig:Pitch-yaw-motion}
		\end{subfigure}
		\hfill$\leftrightarrow$\hfill
		\begin{subfigure}[t]{162pt}
			\includegraphics{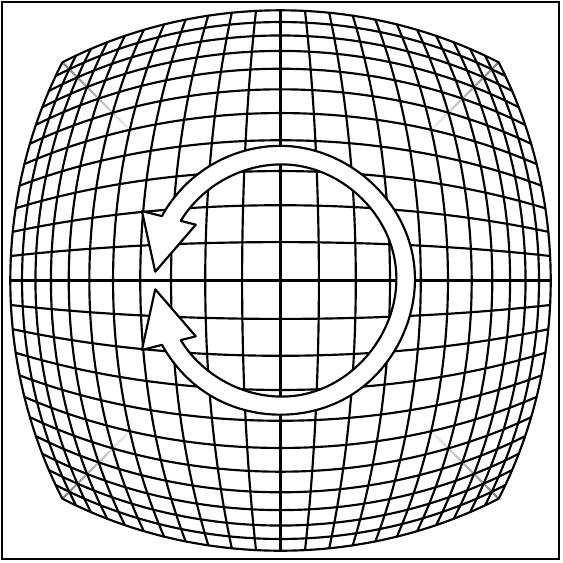}
			\caption[Roll motion image geometry example]
				{Example of image geometry for roll motion (arrows), where $\Omega^h=120\degree$, $k=0$, $l=100\%$ and $s=95\%$.}
			\label{fig:Roll-motion}
		\end{subfigure}
	}
	\caption[View orientation and perspective]
		{Examples of image geometry for a given view-motion type.}
	\label{fig:Example-motion}
\end{figure}

\subsection[History of the topic]{History of the topic and previous work}
\label{sub:history}

Current image abstract theorem was established in 15\textsuperscript{th} century book \emph{De Pictura,} by \noun{L. B. Alberti}. Based on invention of \noun{F. Brunelleschi}, \noun{Alberti} defined geometrical and theoretical rules for perspective projection design.\supercite{Argan1946BrunelleschiPerspective} These rules are currently used in polygon-based CG graphics.
Major theoretical statement that laid foundation on image projection technology and present understanding of image nature can be traced back to \noun{Alberti} abstract definition of image. He described painting being like a window in a wall\supercite{Alberti1972teatrise}:
\begin{quote}
	\textitquote{First of all, on the surface on which I am going to paint, I draw a rectangle of whatever size I want, which I regard as an open window through which the subject to be painted is seen.} --- \noun{L. B. Alberti}
\end{quote}
But in times of its discovery, as for now, linear perspective introduced itself with several issues. When there's a need for a wide-angle view, one close to human visual field, geometrical distortions appear to dominate visual aspect of the picture. These issues were noticed by Renaissance artists, like \noun{L. Da Vinci}. He put to test the Alberti Theorem and produced paintings of accurate-perspective.\supercite{Vinci1480painting} In his \emph{Treatise on Painting}, \noun{Da Vinci} notes that picture conforms to the idea of a window only when viewed from one specific point in space.\supercite{Vinci2014teatrise} Stating that seen otherwise, objects appear distorted, especially in the periphery. Picture then, viewed from a different point ceases to be like a window in a wall and becomes a visual symbol of an abstract point of view.\footnote{Effect also referred to \emph{Zeeman Paradox}.\supercite{Dixon1987paradox}}
Some 18\textsuperscript{th} century late Baroque and Neoclassical artists, when encountered mentioned issues, introduced derivative projections. Like \noun{G. P. Pannini} with later re-discovered \emph{Panini Projection},\supercite{Pannini1754painting} or \noun{R. Barker}, who established the term\emph{ Panorama}.\supercite{Barker2019wiki} This was a new type of perspective. In a form of cylindrical projection, where abstract window frame becomes horizontally curved, reducing deformation artifacts in wide, panoramic depictions of architecture.

\begin{figure}[H]
	\centering\includegraphics{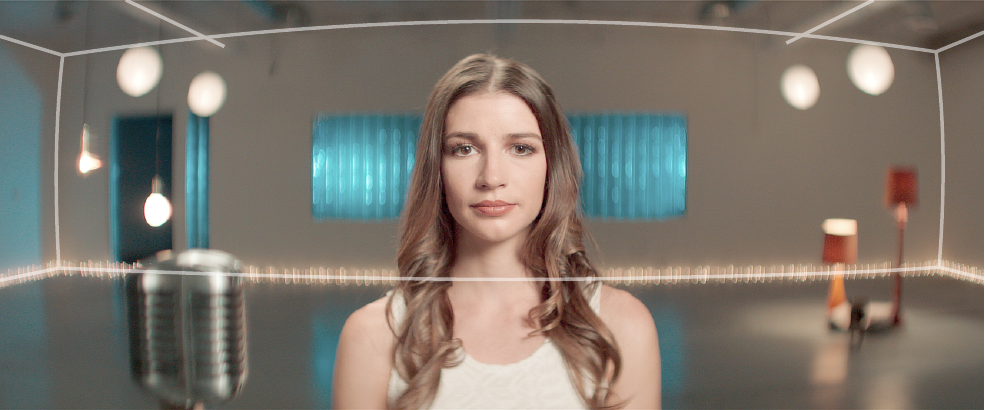}
	\caption[Anamorphic lens picture example]
		{Still from \noun{Todd--AO High-Speed Anamorphic} lens (35mm T1.4) with visible curvilinear perspective. This type of lens was featured in films like \emph{Conan the Barbarian}, \emph{Dune} and \emph{Mad Max}.\supercite{ShareGrid2017ToddAO}\hfill\copyright\ 2017 ShareGrid, Inc.}
	\label{Fig.Anamorphic}
\end{figure}

\paragraph{Invention of motion picture}
followed by the rise of film industry, resulted in a demand for new image geometry. Previously still, now pictures had to be pleasing to the eye, in motion. 1950s brought anamorphic cinematography to the wider audience. Lenses like \noun{CinemaScope} and later \noun{Panavision\supercite{Konigsberg1987book}} became a standard in film production.
Figure \vref{Fig.Anamorphic} shows example of mixed spherical and cylindrical projection with perspective preservation, of anamorphic lens.
\begin{defn}
	\label{def:conservation of perspective}
	\emph{Conservation of perspective} - lines converging at the optical-axis vanishing-point remain straight.
	\begin{rem}
		See also perspective picture definition \vpageref{def:azimuthal perspective}.
	\end{rem}
\end{defn}
\noindent
CG image technology did not follow film industry in that field. Still based on \noun{Alberti} theorem, computer graphics became incompatible with film, generating great costs, when two had to be joined together.\supercite{Rydzak2014interview} Stitching such picture requires lens aberration rotoscopy, where geometry correction is performed manually at each frame.
Currently in computer-games industry, CG imagery is practically unable to produce realistic, curvilinear simulation of visual space (VS), or even simulate anamorphic lens geometry, due to limits of linear perspective and resource costs of overcoming those issues. Some hybrid solutions where proposed,\supercite{Gascuel2008proceedings,Glaeser1999CurvedPerspectiveVR} that combine rasterization with ray-tracing, or tessellation. Such approach allows for a semi-practical and limited production of real-time pictures in a non-linear perspective.

\section[Conclusion]{Conclusion and future work}
\label{sec:conclusion}

Visual sphere perspective model expands possibilities for image creation. Like a vision of a classical artist is richer and more dynamic than his final creation, so should be a model describing image. So much that virtual vision would have to be reduced to fit the medium, with a room for adjustment.

In medieval times people were fascinated with mirror, it depicted reality as it really is, a task impossible for human hand. But mirror could not reflect the vision of imagination, so much as painting. It did not produce realistic reflection of reality, nor reflection of imagination. In art there was a pursuit involving philosophy and religion to reach some imaginary reflection. Only after \noun{Brunelleschi} experiment proved that painting created by human hand, with strict rules of geometry, reflected from a mirror fits well into reality, artists started pursuing mathematics of art. Film and photography became a new mirror, but still not one of imagination. Finally computer graphics became a format capable to reflect one's vision of imagination, but it struggled to be photo-real.

Language of image is ultimately a language of symbols, it works in conjunction with mind, which recreates visual space of the picture.
Computer game image is special in a sense, that rather being a viewing glass (like in a film) it impersonates point of view of the protagonist, which invokes different symbolic measures.
Real-time graphics should be capable to incorporate such visual symbols. One way for achieve this is to change current perspective model for a bigger one.
Proposed here change includes the use of universal perspective model which defines simple variables for manipulation of perspective geometry. It also includes production technique for a real-time imagery using this model.
Such new solution fits well into current graphics pipeline, replacing only low-level rasterization processes. Aliasing-free end result is comparable in quality to 8$\times$MSAA and enables previously impossible visual effects in real-time graphics.
Presented visual sphere model unites all types of perspective projections under one technical solution, making perspective a fluid construct. Perspective vector-maps can be easily combined and transformed by spherical interpolation.
Picture geometry can now be designed to smoothly adapt to the visual story, giving new dimension of control over mental perception of image. Presented concepts and equations may also find their use in other fields, not rendering-related.
This is a great base as well as complete solution upon which many new technologies and works can emerge. Some specific use cases still require additional research, like hidden surface removal and no-parallax point mapping.
Further studies will include research over calibration and simulation of real optical systems with floating NPP. Also performance tests, comparison to current solutions should be evaluated by future research.
Psychological analysis of perspective geometry magnitude of influence on depicted space perception, performed on a large sample data, could be an interesting field to study.

\section*{\centering Acknowledgments} 
\addcontentsline{toc}{section}{Acknowledgments} 
\label{sec:acknowledgments}

Special thanks to
\noun{Zbigniew Rybczyński} who told that we don't see the world in rectilinear perspective and that CG image does not match camera lens projection (during his 2010 conference in Cieszyn). This single presentation became a spark for this research.
Additional thanks to \noun{Shaun Williams} who combined cylindrical and spherical projection based on a view angle.
To \noun{Alan Radley} and \noun{Kim Veltman} for their interest in this topic and paper.
To \noun{Arash Shiva} who provided permission for use of the anamorphic lens test.
Final thanks to \noun{Dobiesław Kaczmarek} who taught principles of perspective drawing.
And to all unmentioned individuals who gave much needed feedback, it helped greatly improving this paper.




\begin{figure}[p]
	\vspace{-2em}\centering
	\begin{subfigure}{5.8cm}
		\includegraphics[width=5.8cm]{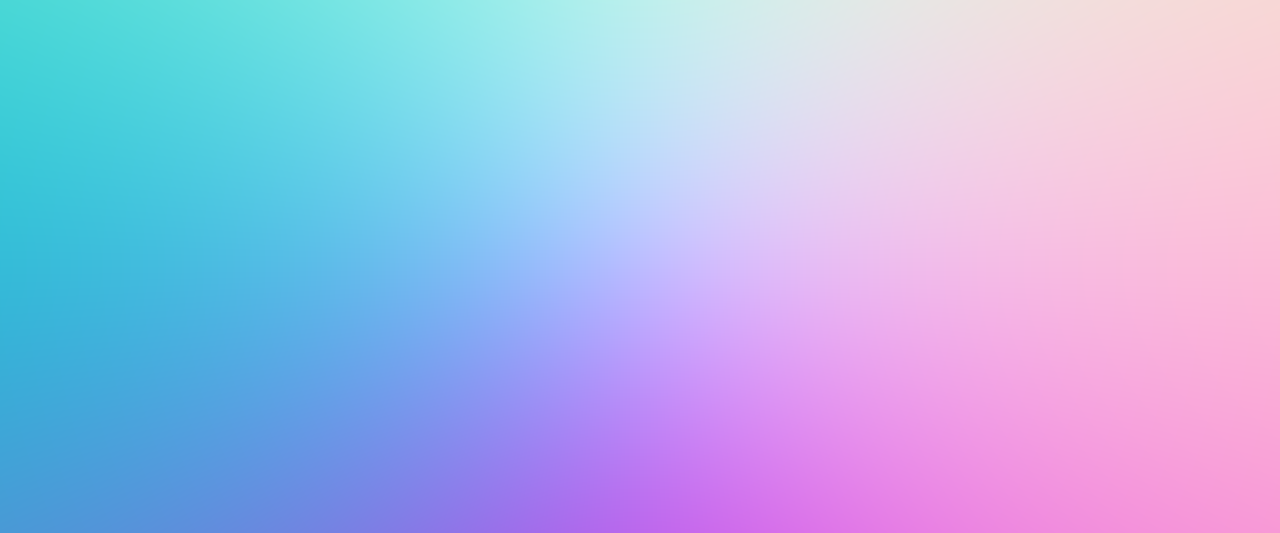}
		\caption{\emph{Vector map} of rectilinear perspective, where $\Omega^d=140\degree$, $k=1$, $l=1$.}
		\label{sub:raster-rectilinear map}
	\end{subfigure}
\;$\rightarrow$\;
	\begin{subfigure}{5.8cm}
		\includegraphics[width=5.8cm]{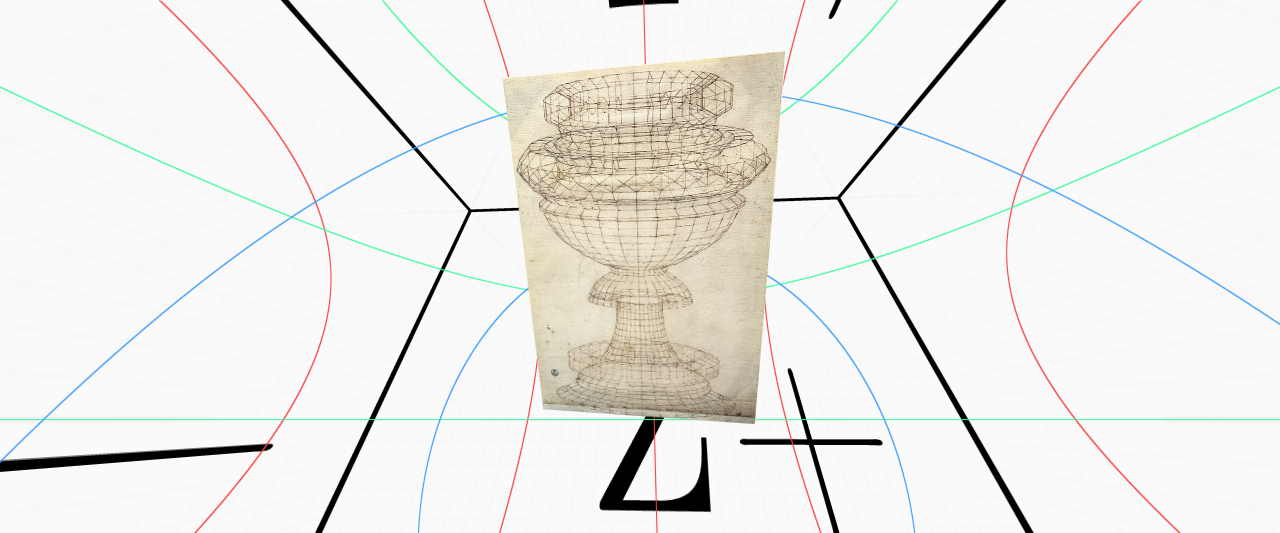}
		\caption{\emph{Rasterized quad} in rectilinear perspective, where $\Omega^d=140\degree$, $k=1$, $l=1$.}
		\label{sub:raster-rectilinear polygon}
	\end{subfigure}
\\[0.618em]
	\begin{subfigure}{5.8cm}
		\includegraphics[width=5.8cm]{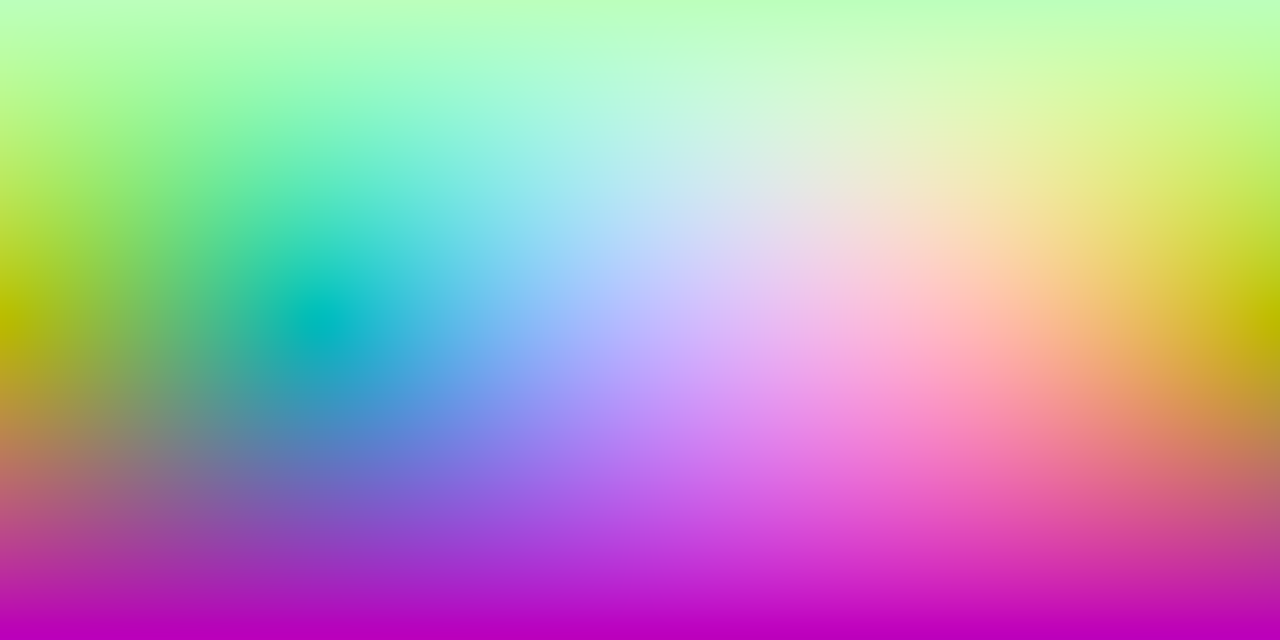}
		\caption{Equirectangular projection \emph{vector map} of a whole sphere, where $\Omega^h=360\degree$ and $\Omega^v=180\degree$.}
		\label{sub:raster-equirectangular map}
	\end{subfigure}
\;$\rightarrow$\;
	\begin{subfigure}{5.8cm}
		\includegraphics[width=5.8cm]{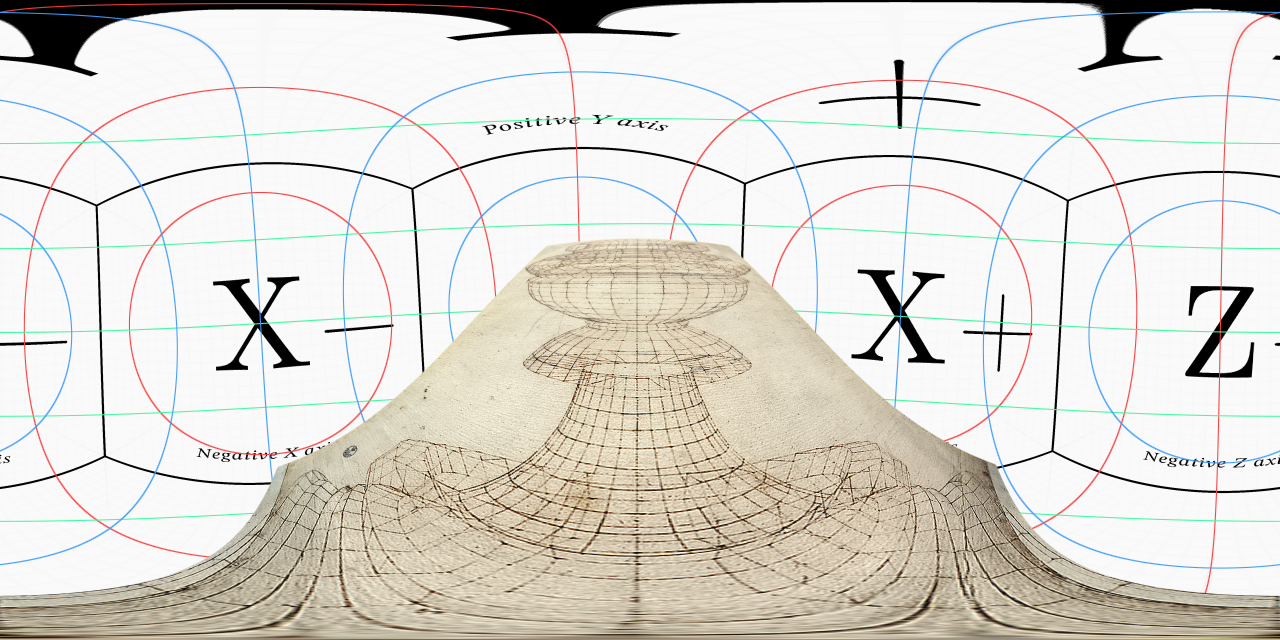}
		\caption{\emph{Rasterized quad} in equirectangular projection, where $\Omega^h=360\degree$ and $\Omega^v=180\degree$.}
		\label{sub:raster-equirectangular polygon}
	\end{subfigure}
\\[0.618em]
	\begin{subfigure}{5.8cm}
		\includegraphics[width=5.8cm]{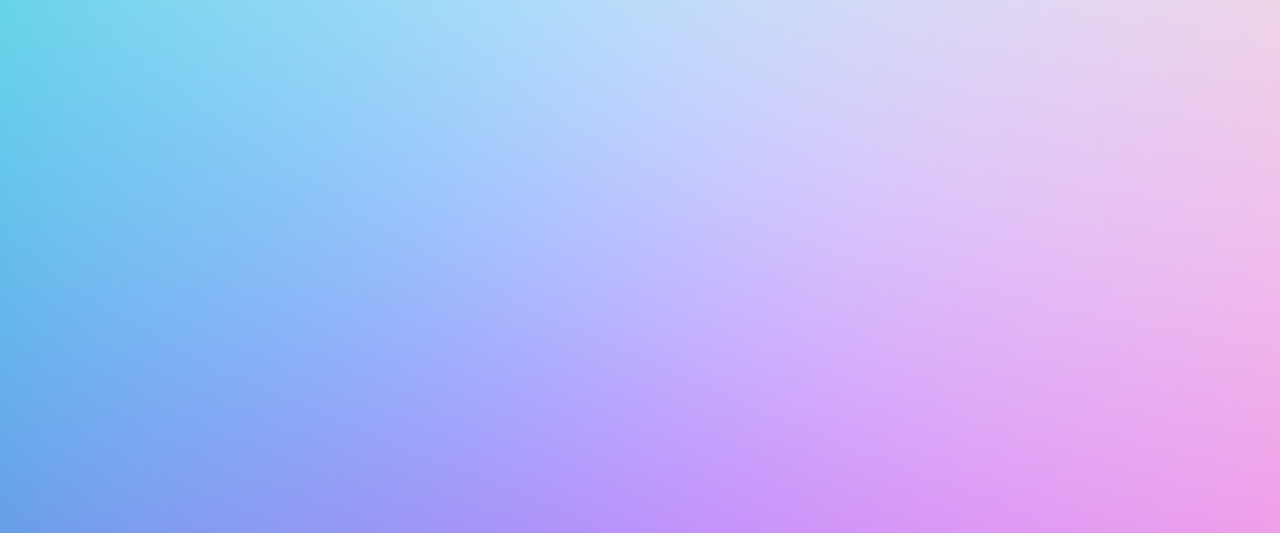}
		\caption{\emph{Vector map} of mustache-type lens distortion, where $\Omega^d=131\degree$, $k=0.32$, $l=62\%$, $s=86\%$, $k_1=-0.6$, $k_2=0.4$.}
		\label{sub:raster-131 map}
	\end{subfigure}
\;$\rightarrow$\;
	\begin{subfigure}{5.8cm}
		\includegraphics[width=5.8cm]{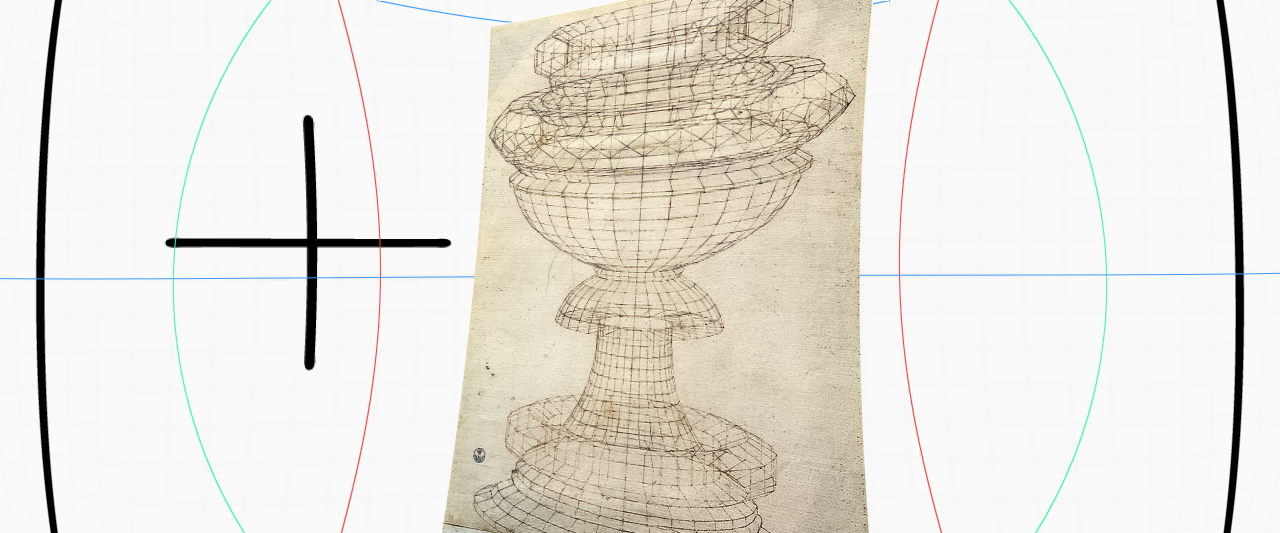}
		\caption{\emph{Rasterized quad} in mustache-type lens distortion, where $\Omega^d=131\degree$, $k=0.32$, $l=62\%$, $s=86\%$, $k_1=-0.6$, $k_2=0.4$.}
		\label{sub:raster- polygon}
	\end{subfigure}
\\[0.618em]
	\begin{subfigure}{5.8cm}
		\includegraphics[width=5.8cm]{Fig_fov270_k0_32_l0_62_s0_86_map}
		\caption{\emph{Vector map} of curvilinear fish-eye perspective, where $\Omega^d=270\degree$, $k=0.32$, $l=62\%$, $s=86\%$.}
		\label{sub:raster-270 map}
	\end{subfigure}
\;$\rightarrow$\;
	\begin{subfigure}{5.8cm}
		\includegraphics[width=5.8cm]{Fig_fov270_k0_32_l0_62_s0_86_triangle}
		\caption{\emph{Rasterized quad} in curvilinear fish-eye perspective, where $\Omega^d=270\degree$, $k=0.32$, $l=62\%$, $s=86\%$.}
		\label{sub:raster-270 polygon}
	\end{subfigure}
\\[0.618em]
	\begin{subfigure}{5.8cm}
		\includegraphics[width=5.8cm]{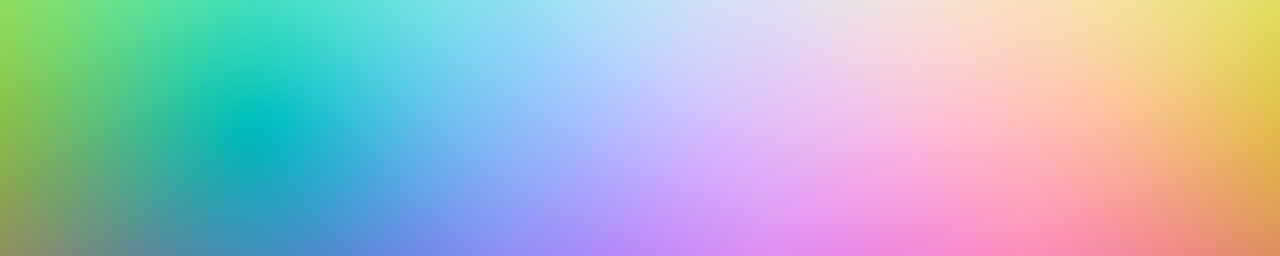}
		\caption{\emph{Vector map} of five-screen horizontal array in rectilinear projection, where single screen $\Omega^{h,v}=60\degree$, $k=1$ and $l=1$, giving a total $5\Omega^h=300\degree$.}
		\label{sub:raster-array map}
	\end{subfigure}
\;\raisebox{2em}{$\rightarrow$}\;
	\begin{subfigure}{5.8cm}
		\includegraphics[width=5.8cm]{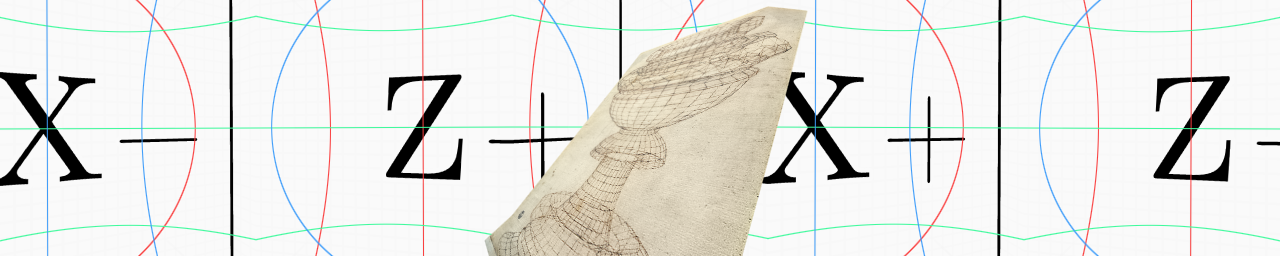}
		\caption{\emph{Rasterized quad} in five-screen horizontal array, where single screen $\Omega^{h,v}=60\degree$, $k=1$ and $l=1$, giving a total $5\Omega^h=300\degree$.}
		\label{sub:raster-array polygon}
	\end{subfigure}

	\caption[Rasterization examples]
		{Examples of polygon quad rasterization directly from three-dimensional space to image, using visual-sphere vector map $G$, where $G_{\vec{st}}\in[0,1]^3\mapsfrom[-1,1]^3$.}
	\label{fig:Examples-of-rasterization}
\end{figure}

\begin{figure}[p]
	\vspace{-3em}
	\centering\includegraphics{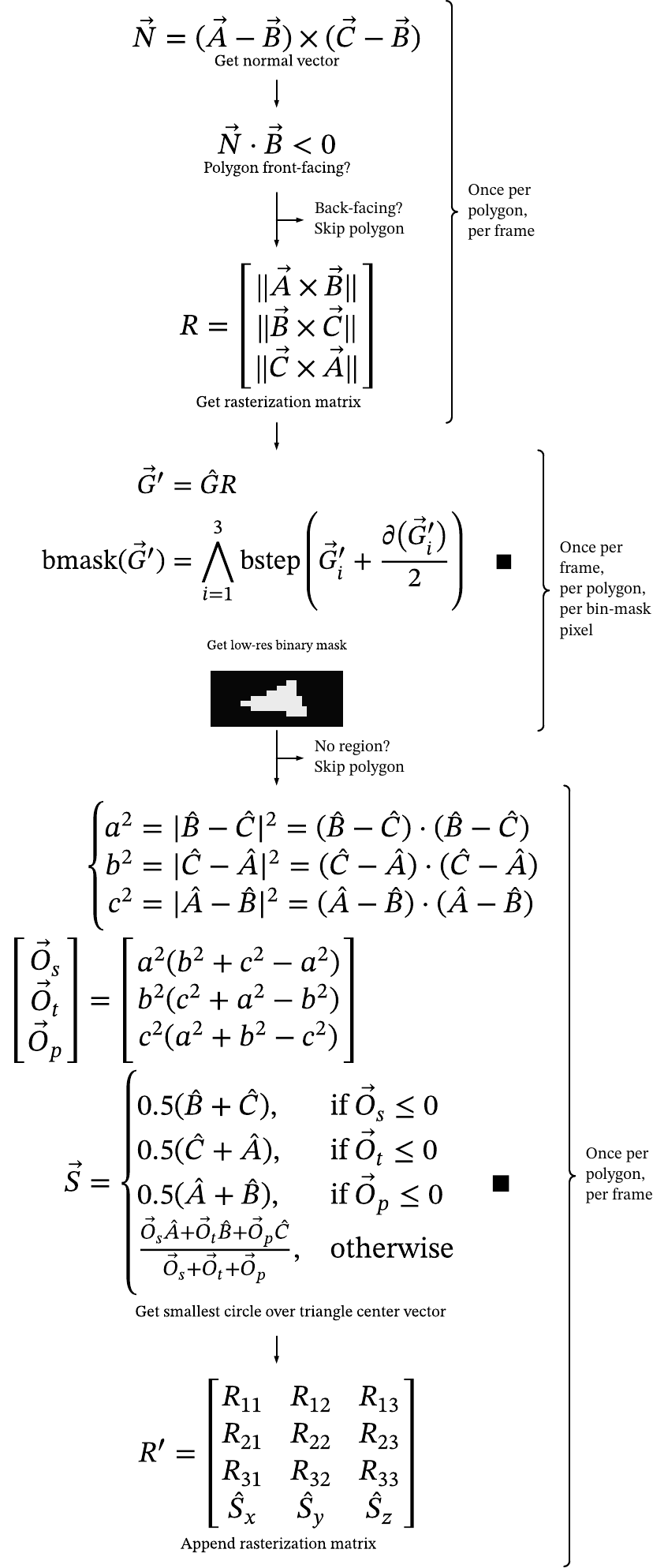}
	\caption[Vertex rasterization process flowchart]
		{Vertex rasterization process flowchart. Chart starts with $\vec A,\vec B,\vec C$ triangle points (transformed to camera-space) and produces low-resolution binary mask (using perspective vector map $G$) and $4\times3$ rasterization matrix $R'$.}
		\label{fig:vertex flowchart}
\end{figure}
\begin{figure}[p]
	\centering\includegraphics{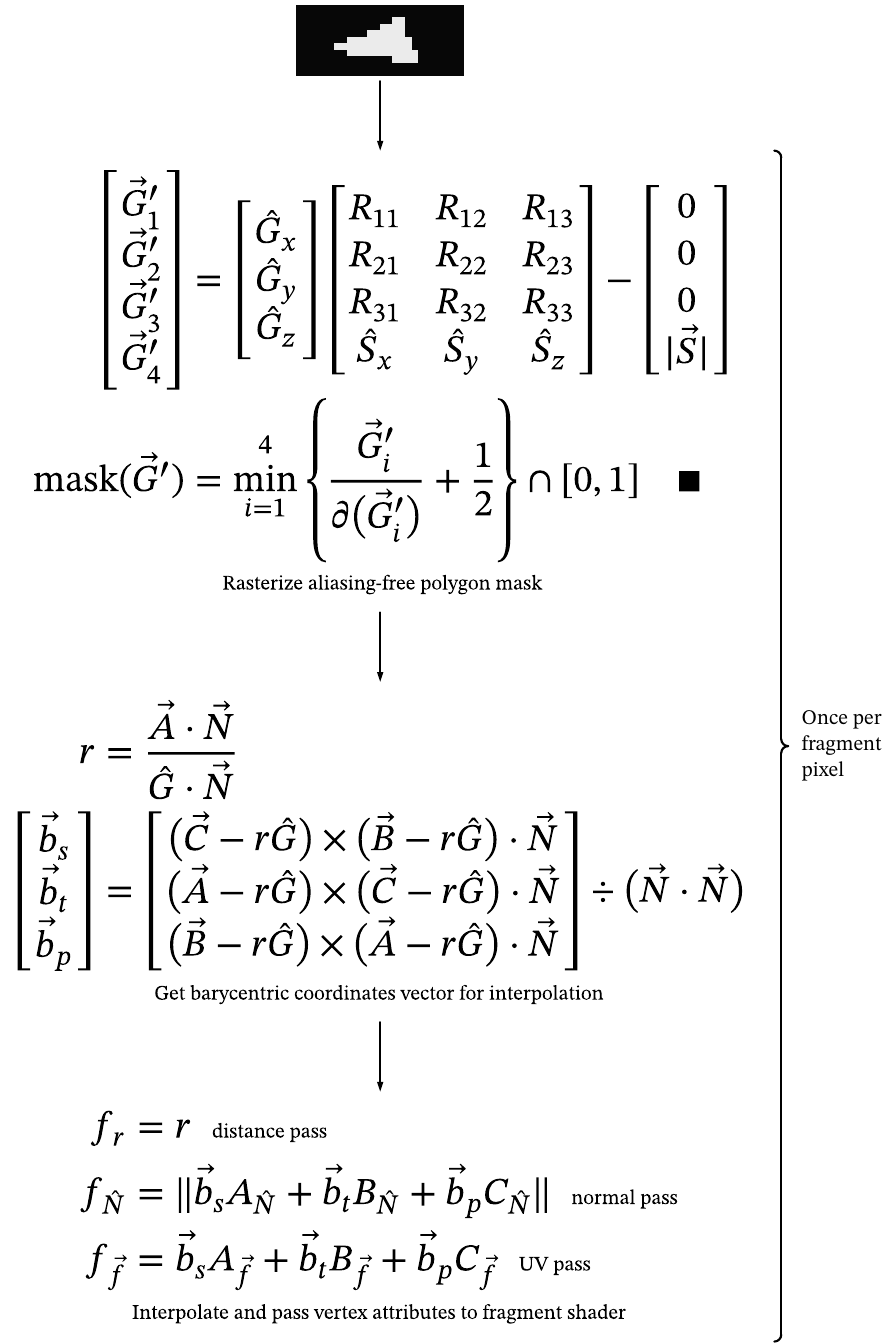}
	\caption[Fragment rasterization process flowchart]
		{Fragment rasterization process flowchart. Chart starts with rasterization matrix $R'$, binary mask of render regions, polygon normal vector $\vec N$, triangle points $\vec A,\vec B,\vec C$ and perspective vector map $G$. For each pixel, aliasing-free mask is produced and barycentric coordinates $\vec b$ are calculated, then interpolated vertex data $f$ is passed to fragment shader as output.}
		\label{fig:fragment flowchart}
\end{figure}

\begin{figure}[p]
	\vspace{-2em}
	\centerline{
		\begin{subfigure}{212pt}
			\includegraphics{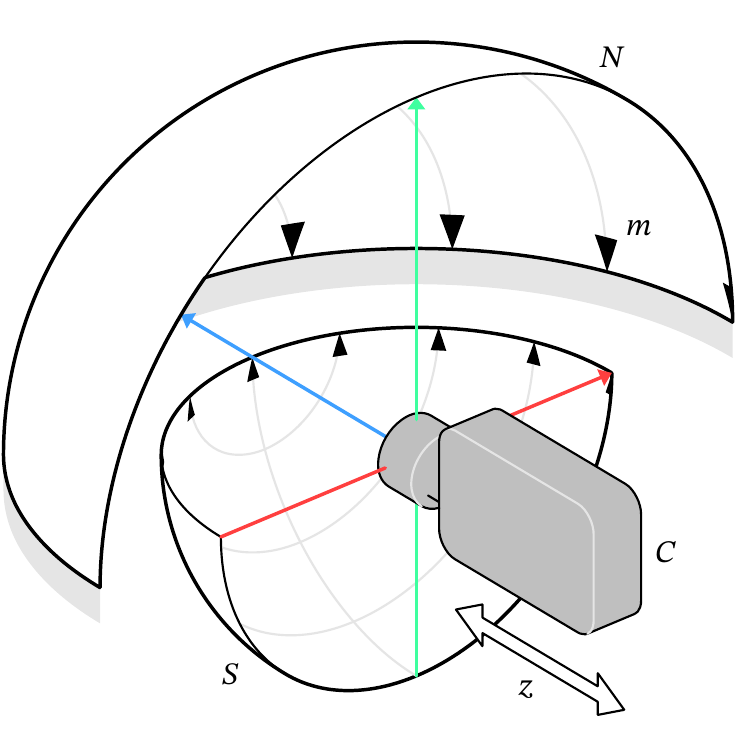}
			\caption[NPP model]
				{Model of no-parallax point (NPP) calibration rig. Which measures misalignment of $m$ markers between northern $(N)$ and southern hemisphere ($S$), as seen through camera $C$.}
			\label{fig:npp model}
		\end{subfigure}
		\quad
		\begin{subfigure}{212pt}
			\includegraphics{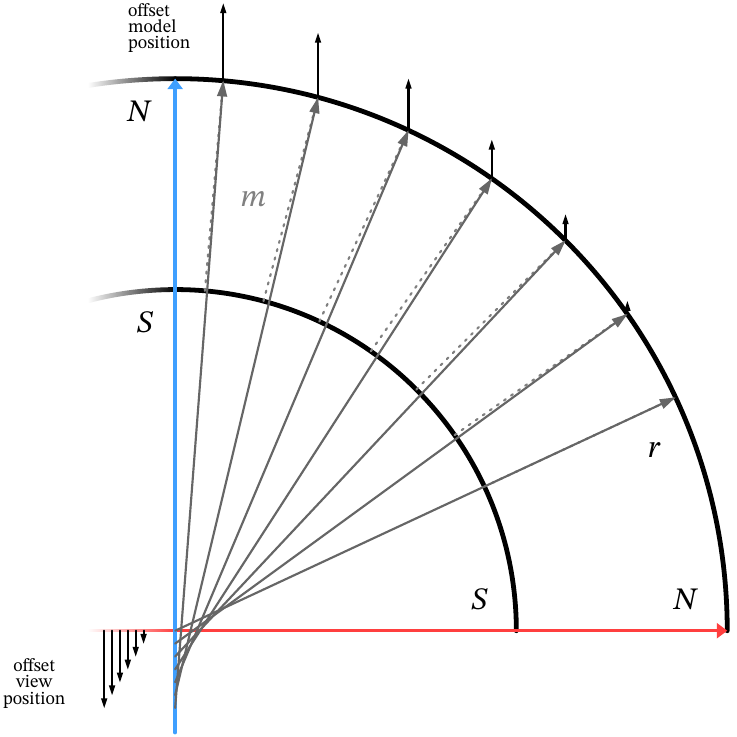}
			\caption[NPP examples]
				{Simulation of floating NPP in fish-eye camera. Black arrows represent position offset of the model for rasterization and view position offset for ray-tracing. When measuring the NPP (sub-figure \ref{fig:npp misalignment examples}), camera travel is opposite to view position offset.}
			\label{fig:npp offset model}
		\end{subfigure}
	}
	\centerline{
		\begin{subfigure}{212pt}
			\includegraphics{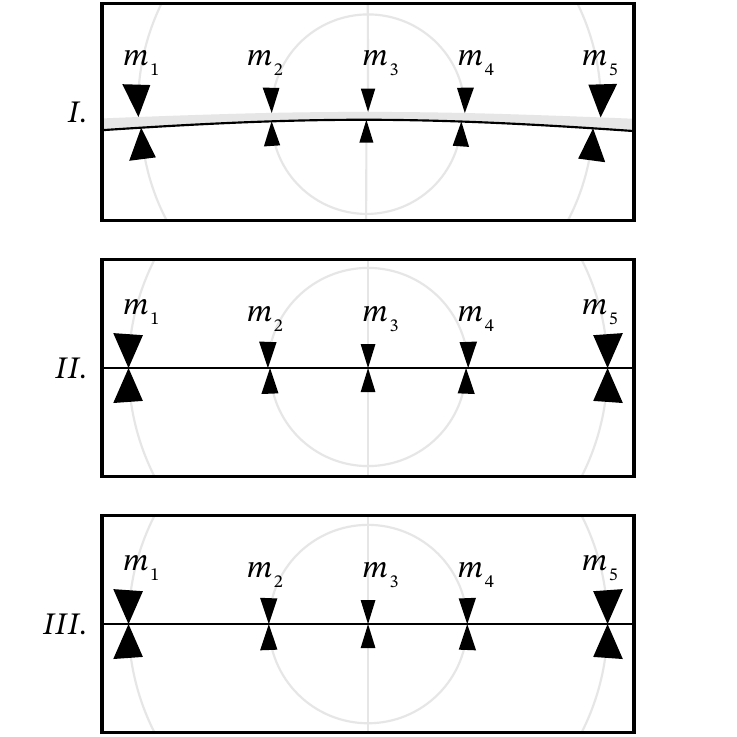}
			\caption[NPP examples]
				{Example $I$ presents misalignment of the camera in all three axes. Example $II$ presents alignment in X, Y axis, where peripheral marker pair $m_1$, $m_5$ are aligned near horizontal AOV, while pairs $m_2$, $m_4$ are misaligned due to floating NPP. Example $III$ presents "slit-scan" composite of variable $z$ position, where all markers are aligned.}
			\label{fig:npp misalignment examples}
		\end{subfigure}
		\quad
		\begin{subfigure}{212pt}
			\includegraphics{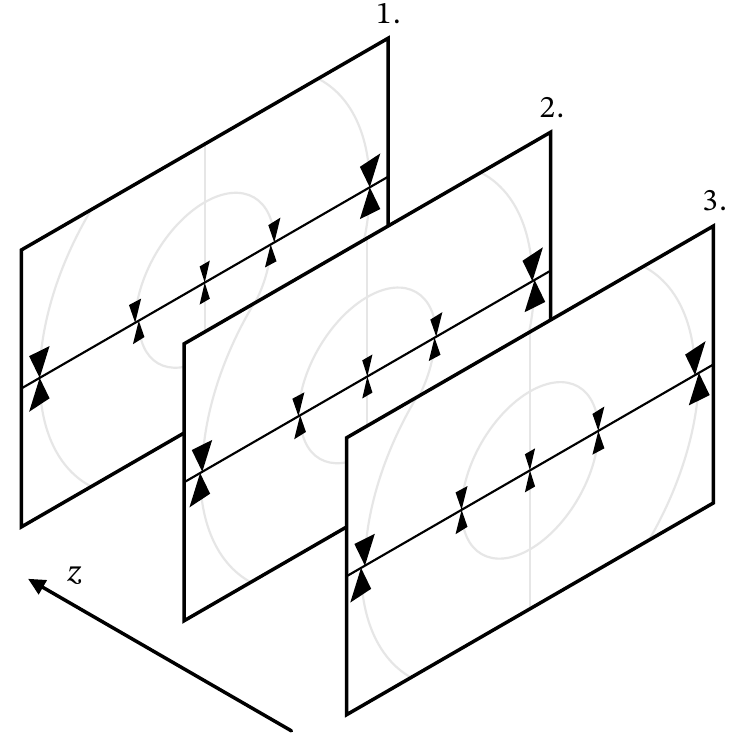}
			\caption[NPP image sequence of $z$ travel]
				{Example of a variable camera position $z$ encoded in an image sequence $(1,2,3)$. Element $1$ presents alignment of the peripheral markers, while element $3$, alignment of the side markers. Element number $2$ presents position in-between.}
			\label{fig:npp image sequence z travel}
		\end{subfigure}
	}
	\caption[Floating no-parallax point]
		{Floating no-parallax point rendering and measurement process.}
	\label{fig:Floating-no-parallax-point}
\end{figure}


\pagebreak
\addcontentsline{toc}{section}{References}
\label{sec:references}
\printbibliography


\begin{figure}[b]
	\phantomsection
	\addcontentsline{toc}{section}{License notice}
	\hypertarget{license}{
		\noindent\footnotesize\copyright\ {\the\year} Jakub Maksymilian Fober (the Author).\quad The Author provides this document (the Work) under the Creative Commons CC BY-ND 3.0 license available online. To view a copy of this license, visit \href{\license}{\license} or send a letter to Creative Commons, PO Box 1866, Mountain View, CA 94042, USA. The Author further grants permission for reuse of \hyperref[pg:title]{images and text from the first page} of the Work, provided that the reuse is for the purpose of promoting and/or summarizing the Work in scholarly venues and that any reuse is accompanied by a scientific citation to the Work.\vspace{4pt} \\
		\centerline{
			\href{\license}{\includegraphics{license_by-nd.pdf}}
		}
	}
\end{figure}

\end{document}